\documentclass[12pt]{article}
\usepackage[margin=2cm]{geometry}
\usepackage{amsmath,amstext,amsfonts,amsthm,amssymb,bbm,hyperref}
 \usepackage{romanbar}
\usepackage[colorinlistoftodos]{todonotes}
\usepackage{wrapfig}

\theoremstyle{plain}
\newtheorem{thm}{Theorem}
\newtheorem{defin}{Definition}[section]

\newtheorem{prop}[defin]{Proposition}

\newtheorem{rmk}[defin]{Remark}
\newtheorem{cor}[defin]{Corollary}
\newtheorem{lemma}[defin]{Lemma}
\newtheorem*{thm*}{Theorem}

\def\Sym{\mathfrak S}

\def\H{\mathfrak H}
\def\N{\mathcal N}
\def\Zd{\mathbb Z^d}
\def\Even{\mathbb{Z}^{d,\operatorname{even}}}
\def\Tab{\operatorname{Tab}}
\def\Bitab{\operatorname{Bitab}}

\def\Pl{\operatorname{Pl}}
\def\head{\operatorname{head}}
\def\tail{\operatorname{tail}}
\def\p{{\mathbf p}}
\def\lam{{\boldsymbol\lambda}}
\def\t{{\boldsymbol t}}
\def\s{{\boldsymbol s}}
\def\r{{\boldsymbol r}}

\begin{document}
\title{The umpteen operator and its Lifshitz tails}
\author{Ohad N.\  Feldheim$^1$  and Sasha Sodin$^2$}
\maketitle

\footnotetext[1]{Einstein Institute of Mathematics, The Hebrew University of Jerusalem, Givat Ram, Jerusalem, Israel. Email: ohad.feldheim@mail.huji.ac.il. Supported in part by ISF grant 1327/19.} 
\footnotetext[2]{School of Mathematical Sciences, Queen Mary University of London, London E1 4NS, United Kingdom. Email: a.sodin@qmul.ac.uk. Supported in part by the European Research Council starting grant 639305 (SPECTRUM) and by a Royal Society Wolfson Research Merit Award.}

\begin{abstract}
As put forth by Kerov in the early 1990s and elucidated in  subsequent works, numerous properties of Wigner random matrices are shared by certain linear maps playing an important r\^ole in the representation theory of the symmetric group. We introduce and study an operator of representation-theoretic origin which bears some similarity to discrete random Schr\"odinger operators acting on the $d$-dimensional lattice. In particular, we define its integrated density of states and prove that in dimension $d \geq 2$ it boasts Lifshitz tails similar to those of the Anderson model.

The construction is closely related to an infinite-board version of the fifteen puzzle, a popular sliding puzzle
from the XIX-th century. We estimate, using a new Peierls argument, the probability that  the puzzle returns to its initial state after $n$ 
random moves.
The Lifshitz tail is deduced using an identification of our random operator with the action of the adjacency
matrix of the puzzle on a randomly chosen representation of the infinite symmetric group.
\end{abstract}

\section{Introduction} 

\paragraph{A family of operators} Let $G = (V, E)$ be a finite or countable connected graph of bounded degree. Denote by $\Sym[V]$ the group of finitely supported permutations of $V$, and let
\[\begin{split}
 \ell_2(\Sym[V]) &= \left\{ u = \sum_{\pi \in \Sym[V]} c_\pi \, \pi \, \middle| \, \|u\|^2
 \overset{\text{def}}{=} \sum_{\pi \in \Sym[V]} |c_\pi|^2 < \infty\right\}~,\\
\H [V] &= \ell_2(V \to \ell_2(\Sym[V])) = \left\{ \psi: V \to \ell_2(\Sym[V]) \, \middle| \, \|\psi\|^2  \overset{\text{def}}{=} \sum_{x \in V} \| \psi(x)\|^2 < \infty \right\}~. 
\end{split}\]
The subject of this note is the bounded self-adjoint operator $H[G]$ acting on $\H[V]$ via
\begin{equation}\label{eq:defH} (H[G]\psi)(x) = \sum_{y \sim x} (x\, y) \psi(y)~,\end{equation}
where the sum is taken over the vertices $y \in V$ which are adjacent to $x$, and $(x\, y) \in \Sym[V]$ is the transposition exchanging $x$ and $y$. 
We shall freely switch between (\ref{eq:defH}) and the block matrix representation $H[G] = (H[G](x,y))_{x,y \in \Zd}$, where the blocks
\begin{equation}
H[G](x, y) = \begin{cases}
(x\, y)~, &x \sim y \\
0
\end{cases} 
\end{equation}
represent operators acting on $\ell_2(\Sym[V])$. 

The motivation to study $H[G]$ comes from its interpretation as a random operator. If $G$ is finite, this interpretation is based on the classical representation theory of the symmetric group. For each irreducible representation $\lam: \Sym[V] \to E_\lam$ (where $E_\lam$ is the ambient space), let $H[G; \lam]$ be the operator acting on $\ell_2(V \to E_\lam)$ via
\[ (H[G; \lam] \psi)(x) = \sum_{y \sim x} \lam \big( (x\, y) \big)  \psi(y)~. \]
Then 
\begin{equation}\label{eq:domp-h-finite} H[G] \simeq \bigoplus_{\lam \in \operatorname{Irrep}\Sym[V]} H[G; \lam] \oplus \cdots \oplus H[G; \lam]~,\end{equation}
where the addend corresponding to an irreducible representation $\lam \in \operatorname{Irrep}\Sym[V]$ appears $\dim \lam = \dim E_\lam$ times. Define the Plancherel probability distribution $\Pl_N(\{\lam\}) = \frac{\dim^2 \lam}{|V|!}$ on $\operatorname{Irrep} \Sym[V]$, then (\ref{eq:domp-h-finite}) implies that the normalised eigenvalue counting function  
\[ \N_G(\lambda) = \frac{1}{|V| \times |V|!} \times \text{number of eigenvalues of $H[G]$ in $(-\infty, \lambda]$} \]
is equal to the expectation $\N_{G}(\lambda) = \mathbb E \N_{G; \lam}(\lambda)$ of the normalised eigenvalue counting function
\[ \N_{G; \lam}(\lambda) = \frac{1}{|V| \times \dim \lam} \times \text{number of eigenvalues of $H[G; \lam]$ in $(-\infty, \lambda]$}~,\]
corresponding to a representation $\lam$ chosen at random according to $\Pl_N$ (here and forth eigenvalues are counted with multiplicity).

The properties of the random cumulative distribution function $\N_{G; \lam}(\lambda)$ and of the average $\N_G(\lambda)$ can be described in detail for the case of the complete graph $H = K_N$. As $N \to \infty$, one has the following semicircular asymptotics, which we have learnt from Alexey Bufetov: if $\lam$ is chosen at random according to $\Pl_N$, then
\begin{align}
\label{eq:meanfield''}
 \mathbb E \N_{K_N; \lam} (\sqrt N \lambda) \longrightarrow \int_{-\infty}^\lambda \frac1{2\pi } \left[ \max(0, 4-s^2) \right]^{\frac 12} ds~,
\\
\label{eq:meanfield'}
 \N_{K_N; \lam} (\sqrt N \lambda) \overset{\operatorname{distr}}{\longrightarrow}  \int_{-\infty}^\lambda \frac1{2\pi } \left[ \max(0, 4-s^2) \right]^{\frac 12} ds~.
\end{align}
These relations illustrate one of the numerous common properties between  $H[K_N]$ and Wigner random matrices. The parallelism between the representation theory of large symmetric groups and random matrix theory was first put forth by Vershik \cite{Ver}, which triggered developments in many different directions. Here we focus on one particular connection, found by Kerov: loosely speaking, $H[K_N]$ is a representation theoretic counterpart of the Gaussian Unitary Ensemble (or, more precisely, of its direct integral over the realisations of the randomness). In particular, the relations (\ref{eq:meanfield''})--(\ref{eq:meanfield'})  can be proved using the method of moments, similarly to Wigner's proof of the semicircular law for random matrices \cite{Wigner}: letting $\tau: \ell_2(\Sym[V]) \to \mathbb C$ be the functional sending $\sum c_\pi \pi$ to $c_{\mathbbm 1}$, the coefficient of the identity, we have
\[\begin{split} \lim_{N\to\infty}\frac{1}{N \times N!} \operatorname{tr} (H[K_N]/\sqrt N)^n 
&= \lim_{N\to\infty}\frac1N \sum_{j=1}^N \tau \left( (H[K_N]/\sqrt N)^n (j, j)\right) \\
&= \begin{cases}
\frac{n!}{(n/2)!((n/2)+1)!}~, & \text{$n$ is even}\\
0
\end{cases} 
= \frac{1}{2\pi} \int_{-\infty}^\infty  \lambda^n \left[ \max(0, 4-\lambda^2) \right]^{\frac 12} d\lambda~,\end{split}\]
which implies (\ref{eq:meanfield''}); a similar computation of the variance implies (\ref{eq:meanfield'}). We omit the details (see e.g.\ \cite{me-Ker} for  computations of this kind).

On the other hand, $H[K_N]$  enjoys numerous symmetries, for exapmple, the trace of any polynomial of $H[K_N]$ lies in the center of the group algebra (as emphasised by Bufetov, this and other properties of $H[K_N]$ are parallel to those of the Perelomov--Popov matrices appearing in the representation theory of classical Lie groups; see \cite{PP} and the recent works of Bufetov--Gorin \cite{BufGor} and Collins--Novak--\'Sniady \cite{CNS}). Consequently, $H[K_N]$ can be explicitly diagonalised. Using the block matrix representation of Jucys--Murphy elements due to Biane \cite[Proposition 3.3]{Biane} (a closely related result was obtained earlier by Gould \cite{Gould}), one can show that (\ref{eq:meanfield'}) is equivalent to Kerov's semicircular law for the transition measure associated to a Young diagram chosen at random according to the Plancherel measure \cite{Kerov, Kerov-book}; as shown by Kerov, the latter is equivalent to  the Logan--Shepp--Vershik--Kerov theorem \cite{LS,VK1,VK2} on the limit shape of random Young diagrams. We refer to \cite{Kerov-book} and also to \cite{me-Ker} and references therein for further discussion, and to the works \cite{Biane,Ok,Ok2,Joh,BG} and references therein for some of the additional connections between random matrix theory and the representation theory of the symmetric group.

\paragraph{The main result}
Here we explore the properties of $H[\Zd]$, which is, very loosely speaking, a representation theoretic counterpart of a random Schr\"odinger operator. It has much less symmetries than the mean-field operator $H[K_N]$, and is therefore probably impossible to diagonalise explicitly. On the other hand, it is sensitive to the geometry of the underlying lattice $\Zd$, and as such exhibits interesting features reminiscent of those known  in the theory of disordered systems. 

Our main result pertains to one of the basic objects of study, the integrated density of states, which we now define.  Let $B_{L,d} = [-L, L]^d \subset \Zd$. 
\begin{lemma}\label{l:ids} Let $d \geq 1$. As $L \to \infty$, the sequence of probability measures with cumulative distribution function $\N_{B_{L, d}}(\lambda)$ converges weakly to a probability measure with support equal to $[-2d, 2d]$, the cumulative distribution function  $\N_{\Zd}(\lambda)$  of which is uniquely characterised by the relations 
\begin{equation}\label{eq:ids}
\forall p \in \mathbb C[\lambda] \quad \int p(\lambda) d\N_{\Zd}(\lambda) = \tau\big(p(H[\Zd])(0, 0)\big)~.
\end{equation}
\end{lemma}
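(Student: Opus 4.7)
The plan is the method of moments, executed blockwise on $\H[B_{L,d}]$. Since the trace of $\pi \in \Sym[V]$ in the left regular representation of $\Sym[V]$ equals $|V|!\,\delta_{\pi,e}$, summing over diagonal blocks yields
\[
 \int \lambda^n\, d\N_{B_{L,d}}(\lambda) = \frac{\operatorname{tr} H[B_{L,d}]^n}{|B_{L,d}|\cdot |B_{L,d}|!} = \frac{1}{|B_{L,d}|}\sum_{x \in B_{L,d}} \tau\!\left( H[B_{L,d}]^n(x,x) \right).
\]
Expanding $H[B_{L,d}]^n(x,x)$ as a sum over closed walks $x = x_0 \sim x_1 \sim \cdots \sim x_n = x$ weighted by the ordered product $(x_0 x_1)(x_1 x_2) \cdots (x_{n-1} x_0) \in \Sym[B_{L,d}]$, the functional $\tau$ counts exactly those walks for which this transposition product is the identity.

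A standard bulk--boundary separation now delivers moment convergence. If $x$ lies at distance greater than $n$ from $\partial B_{L,d}$, any length-$n$ closed walk from $x$ stays in $B_{L,d}$, so translation invariance of $\Zd$ gives $\tau(H[B_{L,d}]^n(x,x)) = \tau(H[\Zd]^n(0,0)) =: m_n$, and the trivial walk-counting bound yields $|m_n| \leq (2d)^n$; the boundary layer contributes $O(n L^{d-1} (2d)^n / L^d) \to 0$ for fixed $n$. A short Cauchy--Schwarz estimate (using that each transposition block is unitary) shows $\|H[G]\| \leq \deg_{\max}(G) \leq 2d$, so every $\N_{B_{L,d}}$ is supported in $[-2d, 2d]$; uniqueness of the Hausdorff moment problem on this compact interval then upgrades moment convergence to weak convergence to a unique probability measure $\N_{\Zd}$ on $[-2d, 2d]$ with moments $(m_n)$, and the characterisation (\ref{eq:ids}) follows by linearity. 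Along the way, the sign automorphism $\pi \mapsto \operatorname{sgn}(\pi)\,\pi$ of $\mathbb C[\Sym[\Zd]]$ sends $H[\Zd] \mapsto -H[\Zd]$ and preserves $\tau$, forcing $m_n = 0$ for odd $n$ and $\N_{\Zd}$ to be symmetric about $0$.

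The only serious remaining point is $\operatorname{supp}(\N_{\Zd}) = [-2d, 2d]$. The inclusion $\subseteq$ is automatic, and by the symmetry just noted it suffices to establish $2d \in \operatorname{supp}(\N_{\Zd})$ together with the absence of interior gaps. The first is equivalent to $\limsup_n m_{2n}^{1/(2n)} = 2d$, i.e.\ to the number of closed length-$2n$ walks in $\Zd$ with trivial transposition product growing at the full exponential rate $(2d)^{2n}$. I expect this to be the main obstacle: naive constructions (successive out-and-back trips, or lifts to the $2d$-regular universal cover on which every closed walk has trivial product) only reach $(2d)^n$ or $(2\sqrt{2d-1})^{2n}$, strictly below $(2d)^{2n}$ once $d \geq 2$, so one must also include walks that traverse nontrivial cycles of $\Zd$ an even number of times with cancellation. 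I would derive the required estimate from the Peierls-type analysis of the infinite fifteen puzzle announced in the abstract. Ruling out interior gaps in the support is comparatively soft and should follow by a density/continuity argument, e.g.\ via the Lifshitz-tails mechanism developed later in the paper or by directly exhibiting, for each $\lambda \in (-2d,2d)$, macroscopic subspaces of $\H[B_{L,d}]$ on which $H[B_{L,d}]$ is close to $\lambda \, \mathrm{id}$.
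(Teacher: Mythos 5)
Your moment-convergence step is correct and essentially identical to the paper's: expand the trace blockwise via $\tau$, note that vertices more than $n/2$ from $\partial B_{L,d}$ contribute exactly $\tau(H[\Zd]^n(0,0))$, bound the boundary-layer terms uniformly, and combine with $\|H[B_{L,d}]\| \leq 2d$ and compactness of $[-2d,2d]$ to obtain weak convergence to a measure satisfying (\ref{eq:ids}). The support claim is where your plan has a real gap, and the gap is not where you place it. You split the task into ``show $2d$ is in the support'' (which you would derive from the lower bound of Theorem~\ref{thm:prob}, valid for $d \geq 2$ but not $d = 1$) and ``rule out interior gaps,'' calling the latter ``comparatively soft'' and offering only a gesture. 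The paper handles both at once with a single, elementary, self-contained construction of approximate eigenvectors, and the interior is no easier than the edge.

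Concretely, for any $\lambda = 2\sum_j \cos 2\pi\alpha_j \in [-2d,2d]$, the paper takes $\psi(x) = e^{2\pi i\langle\alpha,x\rangle}\, v$ for $\|x\| \leq \lceil\ell/2\rceil$ (and $0$ otherwise), where $v = \sum c_\pi\pi$ has coefficients constant on right cosets of $\Sym[B_{\ell,d}]$ in $\Sym[B]$. Every transposition $(x\, y)$ appearing in $(H[B]\psi)(x)$ then acts trivially on $v$, so $H$ reduces to the ordinary lattice Laplacian acting on a truncated plane wave, giving $\|H[B]\psi - \lambda\psi\| \leq C\ell^{-1/2}\|\psi\|$. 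Since the space of such $v$ has dimension $|B|!/|B_{\ell,d}|!$ and disjoint translates of $B_{\ell,d}$ inside $B_{k\ell,d}$ produce independent families, a counting argument gives a positive-density lower bound on the number of eigenvalues in $(\lambda - \epsilon, \lambda + \epsilon)$, hence $\lambda \in \operatorname{supp}\N_{\Zd}$. This is exactly the ``macroscopic subspaces on which $H[B_{L,d}]$ is close to $\lambda\,\mathrm{id}$'' you mention in your final clause, but you do not construct them; and your alternative for the edge --- importing the Peierls estimate of Theorem~\ref{thm:prob} into a preliminary lemma whose whole purpose is to justify terminology --- handles only the endpoint, only for $d \geq 2$, and is far heavier than what is needed.
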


\noindent In other words, $\N_{\Zd}$ is  equal to the spectral measure of $H[\Zd]$ corresponding to the vector $\delta_0 \mathbbm 1$,
\[ (\delta_0 \mathbbm 1)(x) = \begin{cases}
\mathbbm 1~, &x = 0\\
0~.
\end{cases}\]
The proof of Lemma~\ref{l:ids}, mostly mimicking standard arguments from the theory of random operators, is reproduced in Section~\ref{s:ids}. 

The limiting function $\N_{\Zd}(\lambda)$ is called the integrated density of states associated with $H[\Zd]$.  The nomenclature is motivated by a interpretation of $H[\Zd]$ as a random operator (or more precisely, as a direct integral over the randomness  of a family of operators $H[\Zd; \t]$ depending on a random parameter $\t$ taking values in the space of Young bitableaux) which we describe,  building on  the work of Vershik and Tsilevich \cite{VT}, in Section  \ref{s:rand2}. We shall prove (see Corollary~\ref{cor:vt})  that $\N_{\Zd}$ is equal to the expectation
\begin{equation}\label{eq:decomp-vt}\N_{\Zd}(\lambda) = \mathbb E \, \N_{\Zd, \t} (\lambda)~, \end{equation}
of a random cumulative distribution function $\N_{\Zd, \t} (\lambda)$. This property of $\N_{\Zd}$, as well as the one  stated in Lemma~\ref{l:ids}, is suggestively similar to the properties of the integrated density of states of metrically transitive operators on $\Zd$ as described, for example, in the monograph of Pastur and Figotin \cite{PF}.

\medskip\noindent
In dimension $d = 1$,  the integrated density of states can be explicitly computed. Indeed,
\[ \tau(H^{n}(0, 0)) = \begin{cases}
\binom{n}{n/2}~, &\text{$n$ is even} \\0
\end{cases} \]
is the number of paths of length $n$ starting and terminating at the origin, which can be shown to imply that
\begin{equation}\label{eq:1d} \N_{\mathbb Z}(\lambda) = \frac{1}{\pi} \arcsin \sqrt{(2+ x)/4}~;\end{equation}
in particular, $\N_{\mathbb Z}$ has a square root singularity at the edges $\lambda = \pm 2$. Our main result is that for $d \geq 2$ the integrated density of states exhibits the following asymptotics, known in the context of random operators as (quantum) Lifshitz tails:
\begin{thm}\label{thm} For each $d \geq 2$, there exist $C>0$ and $c>0$ such  that
\begin{equation}\label{eq:main} c \exp \left\{-C\epsilon^{-\frac d2} \log  (\frac{1}{\epsilon} + 2)\right\} \leq \N_{\Zd} (-2d+ \epsilon) \leq C \exp \left\{-c\epsilon^{-\frac d2}\right\}~, \quad 0 < \epsilon \leq 1~.\end{equation}
\end{thm}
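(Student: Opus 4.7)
The plan is to prove the two sides of (\ref{eq:main}) by separate arguments, adapting the standard Lifshitz-tail methodology to this representation-theoretic setting.

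For the \emph{upper bound}, I use a Chernoff-type estimate on the Laplace transform of $d\N_{\Zd}$ at the bottom of the spectrum. The operator $H[\Zd]+2dI$ is positive (write it as $\sum_{e=\{x,y\}} Q_e$ with $Q_e \ge 0$ supported on $\H[\{x,y\}]$), and the bipartite $\mathbb Z_2$-symmetry of $\Zd$ kills odd moments, so for every $t>0$,
\[
\N_{\Zd}(-2d+\epsilon) \le e^{t\epsilon}\,\tau\!\big(e^{-t(H+2dI)}(0,0)\big) = e^{t\epsilon-2dt}\sum_{k\ge 0}\frac{t^{2k}}{(2k)!}\,\tau(H^{2k}(0,0)).
\]
Here $\tau(H^{2k}(0,0))$ is the number of length-$2k$ closed walks on $\Zd$ based at the origin whose successive transpositions multiply to the identity in $\Sym[\Zd]$---equivalently, the number of $2k$-move walks of the infinite fifteen puzzle that return every tile to its original position. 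The decisive new combinatorial input is a Peierls-type estimate
\[
\tau(H^{2k}(0,0)) \le (2d)^{2k}\,\exp(-c\,k^{d/(d+2)}),
\]
obtained by a geometric analysis of which edge-multiplicity patterns on closed walks permit full cancellation of the composed transpositions. Substituting, evaluating the resulting series at its Poissonian saddle $k\sim dt$, and optimizing $t\sim\epsilon^{-(d+2)/2}$ yields $\N_{\Zd}(-2d+\epsilon)\le C\exp(-c\epsilon^{-d/2})$.

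For the \emph{lower bound}, I construct explicit low-energy trial vectors using the decomposition (\ref{eq:decomp-vt}). Set $L=\lceil C'\epsilon^{-1/2}\rceil$ with $C'$ chosen so that the bottom eigenvalue $\lambda_0=-2d+\pi^2 d(L+1)^{-2}+O(L^{-4})$ of the adjacency matrix of $B_{L,d}$ is at most $-2d+\epsilon$. Let $A$ be the event that the random representation $\t$ admits a $\Sym[B_{L,d}]$-invariant unit vector $v\in E_\t$. On $A$, the vector $\psi(x) = f(x)\, v$---with $f$ the Dirichlet ground state of the adjacency matrix on $B_{L,d}$, extended by zero outside---satisfies $(H[\Zd;\t]\psi)(x) = v\,(A_{B_{L,d}}^D f)(x)$ for $x\in B_{L,d}$ and $\psi(x)=0$ otherwise, giving $\langle \psi, H[\Zd;\t]\psi\rangle = \lambda_0$ while $\|\psi\|=1$. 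The min-max principle then provides a spectral value of $H[\Zd;\t]$ below $-2d+\epsilon$; a Plancherel character computation bounds $\mathbb P(A) \ge |B_{L,d}|!^{-1} \ge \exp(-C\epsilon^{-d/2}\log(\epsilon^{-1}+2))$, matching the weight of the trivial representation. Repeating the construction with shifted copies of $B_{L,d}$ inside a larger box $B_{M,d}$ to produce $\sim (M/L)^d$ mutually orthogonal trial vectors, integrating against (\ref{eq:decomp-vt}), and then passing to $M\to\infty$ yields the matching lower bound (with polynomial corrections absorbed into the logarithmic factor).

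The main obstacle is the Peierls-type moment estimate. The naive walk bound $\tau(H^{2k}(0,0)) \le (2d)^{2k}\cdot O(k^{-d/2})$---obtained by dropping the puzzle-closure constraint entirely and only requiring that the walk close on $\Zd$---produces only a polynomial upper tail on $\N_{\Zd}(-2d+\epsilon)$, so the exponential improvement must come from carefully exploiting the non-commutativity of transpositions in $\Sym[\Zd]$ via a Peierls-style partitioning of walks according to the topology of their edge usage, and showing that only exponentially few walks in $k$ allow the composed transpositions to cancel.
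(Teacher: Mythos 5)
The critical gap is the moment estimate
\[
\tau(H^{2k}(0,0)) \leq (2d)^{2k}\exp\bigl(-c\,k^{d/(d+2)}\bigr)~,
\]
which you correctly identify as ``the decisive new combinatorial input'' and the ``main obstacle,'' but then only gesture at. This bound \emph{is} the theorem: it is the upper half of Theorem~\ref{thm:prob}, and the paper spends most of Section~3 establishing it, while the reduction of Theorem~\ref{thm} to it is a short Chebyshev computation. Your sketch --- ``geometric analysis of which edge-multiplicity patterns on closed walks permit full cancellation'' --- does not match what is used; edge multiplicities play no role. The actual Peierls device runs as follows: call an even vertex \emph{flexible} if the first visit to it is followed by two steps in linearly independent directions. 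Proposition~\ref{prop:flex} (combining a Donsker--Varadhan-type range bound, Lemma~\ref{l:1}, with a large-deviations estimate for first-visit steps, Lemma~\ref{l:2}) shows that outside an event of probability $Ae^{-(2n)^{d/(d+2)}}$ the walk has $\geq a(2n)^{d/(d+2)}$ flexible even vertices. One then partitions walk heads into equivalence classes by reflecting any subset of the two-step excursions at flexible first visits, giving classes of size $\geq 2^{a(2n)^{d/(d+2)}}$; the punch line is that at most one path per class has $\pi_{2n}[\overrightarrow x]=\mathbbm 1$, because the reflected second step forces the label carried into the flexible vertex to exit along a different, incompatible edge (equations (\ref{eq:cond-x})--(\ref{eq:cond-x'})). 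Without this mechanism your upper bound does not close. Granting the moment estimate, your Chernoff/Laplace deduction is sound and is only a cosmetic variant of the paper's direct argument with a single moment of order $n\sim\epsilon^{-(d+2)/2}$.

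Your lower bound takes a genuinely different route from the paper. You build Dirichlet trial states in a random representation, conditional on the event $A$ that it contains a $\Sym[B_{L,d}]$-invariant vector; your claim $\mathbb P(A)\geq 1/|B_{L,d}|!$ is correct, since $\mathbb E_{\Pl}[\dim E_\lam^{\Sym[B_L]}/\dim\lam]=1/|B_L|!$ (a consequence of $\mathbb E_{\Pl}[\chi^\lam(\sigma)/\dim\lam]=\mathbbm 1_{\sigma=e}$). The paper instead derives the lower bound in (\ref{eq:main}) from the lower bound on $p_{2n}(\Zd)$ in Theorem~\ref{thm:prob}, proved by a Cauchy--Schwarz argument on the decorated chain $(X_j,\pi_j[\overrightarrow X])$ constrained to $B_{L,d}$, which costs precisely the factor $1/|B_{L,d}|!$; the optimization $L\sim(n/\log n)^{1/(d+2)}$, $n\sim\epsilon^{-(d+2)/2}\log(\epsilon^{-1})$ then gives (\ref{eq:main}). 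Your approach is exactly the bracketing alternative that the introduction flags as available for the lower bound (``perhaps with a higher power of the logarithm''), and in fact it delivers the same power. One caution on phrasing: $\t$ in (\ref{eq:decomp-vt}) is an infinite Young tableau, not an irreducible representation, so ``$v\in E_\t$'' is not literally meaningful in that framework; the construction should be run in a finite box $B_{M,d}$ with $\lam\in\operatorname{Irrep}\Sym[B_{M,d}]$ (the paper's proof of Lemma~\ref{l:ids} builds exactly such coset-invariant trial vectors) and then passed to $M\to\infty$ via Lemma~\ref{l:ids}.
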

\begin{rmk}
Due to the bipartite structure of $\Zd$, $\N_{\Zd}(-\lambda) = 1 - \N_{\Zd}(\lambda)$, hence  the right tail $1 - \N_{\Zd}(2d - \lambda)$ exhibits the same asymptotics.
\end{rmk}

We recall that, in the context of random operators, Lifshitz tails were introduced by I.~M.~Lifshitz \cite{Lif}. The exponent $\frac d 2$ in (\ref{eq:main}) is characteristic of the so-called quantum fluctuating boundaries (this terminology is explained in \cite{PF}). In the original setting of Lifshitz, \cite{Lif} predicted the logarithmic asymptotics 
\begin{equation}\label{eq:lif-str}
-  \epsilon^{d/2}  \log\N(\lambda_{\min} + \epsilon)  \to c_* \in (0, \infty)~, \quad \epsilon \to + 0~.
\end{equation}
For other models, (\ref{eq:lif-str}) requires logarithmic corrections.

The first mathematical proofs  of Lifshitz tails (for models in the continuum) were obtained by Pastur \cite{Pastur2, Pastur3, Pastur1}, who analysed the Feynman--Kac representation of the semigroup generated by $H$ with the help of  the large deviation estimates of Donsker and Varadhan \cite{DV1}; see further  \cite{PF} and references therein. The counterpart of this method for random operators on the lattice was developed by Biskup and K\"onig \cite{BK}, as part of their work on the parabolic Anderson model. These works provide logarithmic asymptotics such as (\ref{eq:lif-str}).

An alternative approach to Lifshitz tails, going back to the work of Kirsch--Martinelli \cite{KM} and  Simon \cite{Simon}, is based on Dirichlet--Neumann bracketing, i.e.\ bounding the operator from above and below by a direct sum of its finite-volume restrictions with properly adjusted boundery conditions. This approach is technically simpler and more robust but usually leads to less precise (doubly logarithmic) asymptotics. In our setting, a bracketing argument easily leads to a lower bound such as in (\ref{eq:main}) (perhaps, with a higher power of the logarithm); on the other hand, we have not been able to use it to prove the upper bound.  Yet another approach to Lifshitz tails (in the continuum) was recently introduced by David, Filoche, and Mayboroda \cite{DFM}; it relies on the analysis of the so-called landscape function, introduced and studied by the same authors.

Our  proof of (\ref{eq:main}) is based on the analysis of the moments of the operator (see (\ref{eq:propmain})), and is thus morally closer to the Feynman--Kac  approach. We do not rely on precise large deviation estimates (as these are not available in our non-commutative setting); instead, we use a simple variant of the Donsker--Varadhan estimates (Lemma~\ref{l:1}) and combine it with a new and relatively robust Peierls-type argument. This method may be of independent interest even in the classical setting of random Schr\"odinger operators, for models in which bracketing is unavailable or hard to implement.  

To conclude this brief survey, we mention the work of Bapst and Semerjian \cite{BS}, who applied the moment method to the study of Lifshitz tails for random Schr\"odinger operators on a tree, where bracketing also runs into difficulties. Note however that the proof of the upper bound in \cite{BS} relies on an unproved hypothesis, and the full mathematical proof of the Lifshitz tails on the Bethe lattice was accomplished by Hoecker-Escuti and Schumacher \cite{HES} by different methods.

\paragraph{The fifteen puzzle} 

\begin{wrapfigure}{r}{0.25\textwidth}
\includegraphics[scale=.7]{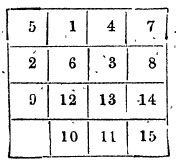}
\caption{An illustration to the fifteen  puzzle from \cite{Ha-Tzefirah} (1880)}\label{fig:15}
\end{wrapfigure}
The proof of our main result is based on an analysis of a probabilistic problem pertaining to a generalised version of the classical fifteen puzzle, the definition of which we now recall. Given a graph $G = (V,E)$, a state of the fifteen puzzle $\Romanbar{15}[G]$ on $G$ is a pair $(\pi, x) \in \Sym[V] \times V$. Two states $(\pi_1, x_1)$ and $(\pi_2, x_2)$ are called adjacent if $x_2 \sim x_1$ and $\pi_2 = (x_2\, x_1) \pi_1$.

Less formally, a state $(\pi, x^*)$ is composed of a marked vertex $x^*$ and a label $\pi(x)$ placed on each vertex $x$. A legal move takes $(\pi, x^*)$ to an adjacent state by exchanging the labels of $x^*$ and one of its neighbours, which now takes the r\^ole of $x^*$. The classical fifteen puzzle, popular since the late 1870-s, is recovered by taking $G = \{1, 2, 3, 4\}^2$ with the graph structure inherited from $\mathbb Z^2$. The marked vertex $x^*$ corresponds to the empty square (see Figure~\ref{fig:15}).

As we emphasise below, the operator $H[G]$ is essentially the adjacency matrix of $\Romanbar{15}[G]$. Using this connection, we reduce the proof of the main theorem  to estimating the probability $p_{2n}(\Zd)$ that the puzzle on the infinite board $\Zd$ returns to the original state after $2n$ independent random moves. Our main auxiliary result provides a  two-sided bound on $p_{2n}(\Zd)$:
\begin{thm}\label{thm:prob} For any $d \geq 2$, there exist $C, c > 0$ such that for any $n \geq 0$
\begin{equation}\label{eq:propmain-intr} c \exp\left\{-C n^{\frac{d}{d+2}} \log^{\frac{2}{d+2}}(n+2)\right\} \leq  p_{2n}(\Zd) \leq C \exp\left\{-c n^{\frac{d}{d+2}} \right\}~.\end{equation}
\end{thm}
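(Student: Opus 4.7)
The plan is to interpret $p_{2n}(\Zd)$ combinatorially. The key observation is that
\[ p_{2n}(\Zd) \;=\; (2d)^{-2n} \, \tau\big(H[\Zd]^{2n}(0, 0)\big)~,\]
which counts, up to the normalisation factor $(2d)^{2n}$, the closed lattice walks $0 = x_0, x_1, \ldots, x_{2n} = 0$ in $\Zd$ whose ordered product of edge transpositions $(x_0\, x_1)(x_1\, x_2) \cdots (x_{2n-1}\, x_{2n})$ equals the identity in $\Stwo$. Both bounds in \eqref{eq:propmain-intr} will follow from matching upper and lower estimates on the number of such ``balanced'' walks.

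For the upper bound I would decompose the count by the range radius $R(\omega) := \min\{R \geq 1 : \omega \subseteq B_{R, d}\}$ and combine two ingredients. The first is a Donsker--Varadhan-type confinement bound (essentially Lemma~\ref{l:1} referenced in the introduction): the number of closed length-$2n$ walks in $\Zd$ based at the origin and contained in $B_{R, d}$ is at most $(2d)^{2n} \exp(-c n/R^2)$, which follows from the top eigenvalue $1 - c/R^2$ of the SRW transition matrix restricted to the box. The second is a Peierls-type algebraic estimate: among closed walks confined to $B_{R, d}$, the fraction whose transposition product equals the identity in $\Stwo$ is at most $\exp(-c R^d)$. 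Granted both, summing the resulting bound $\exp(-c n/R^2 - c R^d)$ dyadically in $R$ and optimising at $R \asymp n^{1/(d+2)}$ (so that $n/R^2 \asymp R^d$) yields the upper bound $C \exp(-c n^{d/(d+2)})$.

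The Peierls-type algebraic estimate is the heart of the argument and the main obstacle. The guiding observation --- which already separates $d \geq 2$ from $d = 1$ --- is that traversing the boundary of a single unit plaquette in $\Zd$ does not produce the identity but a $3$-cycle, as one verifies by direct computation on a unit square in $\mathbb Z^2$. A balanced closed walk confined to $B_{R, d}$ must therefore cancel local plaquette-level contributions simultaneously at essentially every lattice site that it visits heavily. I would attempt to formalise this by assigning to each walk a collection of local defect markers indexed by plaquettes of $B_{R, d}$, and then summing over configurations of such markers in the Peierls spirit: the number of potential defect sites is $\asymp R^d$, and each imposes a nontrivial and essentially independent constraint on the condition that the overall transposition product be trivial, producing the required factor $\exp(-c R^d)$. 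Establishing the approximate independence of these local constraints --- in particular, isolating a positive density of plaquettes whose defect contributions do not interfere --- is where the new combinatorial work is concentrated.

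For the lower bound I would set $R \asymp (n/\log n)^{1/(d+2)}$ and exhibit many balanced walks directly. The matching Donsker--Varadhan lower bound gives that with probability at least $c \exp(-C n/R^2)$ the SRW on $\Zd$ remains inside $B_{R, d}$ during its first $2n$ steps and returns to the origin. Conditional on such confinement, the trajectory performs $2n$ independent random moves of the puzzle $\Romanbar{15}[B_{R, d}]$, which is a connected graph (up to an unavoidable parity constraint when $d = 2$) on $\asymp |B_{R, d}|! \cdot |B_{R, d}|$ vertices with polynomially bounded mixing time. For the chosen $R$, $2n$ is much larger than this mixing time, so the probability of returning to the state $(e, 0)$ is at least of order $1 / (|B_{R, d}|! \cdot |B_{R, d}|) \geq \exp(-C R^d \log R)$. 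Multiplying, the overall lower bound is $\exp(-C n/R^2 - C R^d \log R)$, which at $R \asymp (n/\log n)^{1/(d+2)}$ evaluates to $\exp(-C n^{d/(d+2)} (\log n)^{2/(d+2)})$, matching \eqref{eq:propmain-intr}; the logarithmic correction arises precisely from the $\log R$ factor in $\log(|B_{R, d}|!)$.
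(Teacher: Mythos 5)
Your overall architecture matches the paper's: a Donsker--Varadhan confinement/range estimate supplies the entropic cost, a Peierls-type combinatorial estimate supplies the cost of balancing the permutation, and optimising the competing exponents at $R\asymp n^{1/(d+2)}$ gives the $n^{d/(d+2)}$ rate. The logarithmic correction in the lower bound is also correctly traced to $\log|B_L|!$. However, the two crucial steps are either incomplete or quietly wrong.

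\textbf{Upper bound.} Your Peierls factor $\exp(-cR^d)$ is not correct as stated: a closed walk of range radius $R$ (confined to $B_{R,d}$ but touching its boundary) may visit far fewer than $\Omega(R^d)$ sites --- think of a walk that goes essentially straight out and back. The relevant exponent must be (a constant fraction of) the \emph{number of distinct sites actually visited}, not the volume of the bounding box, and so the dyadic decomposition by confinement radius does not give the bound you want. The paper instead bounds from below the number of ``flexible'' vertices visited --- newly visited vertices at which the next step is not collinear with the entering step --- and this is the right quantity; Lemma~\ref{l:1} (a crude Donsker--Varadhan bound on the range) combined with Lemma~\ref{l:2} shows there are $\gtrsim n^{d/(d+2)}$ of them with overwhelming probability, which is exactly the optimised rate without any dyadic sum. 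Moreover, your sketch of the combinatorial step (plaquette 3-cycles, ``essentially independent'' defect constraints) leaves the hard part unaddressed: establishing any such independence is exactly where a naive plaquette argument breaks down. The paper's mechanism is different and cleaner: at each even flexible vertex one may reverse the two-step segment beginning at its first visit, generating an equivalence class of $2^{\#\{\text{flexible}\}}$ walks with the same even-time positions, and an induction on the first discrepancy (equations (\ref{eq:cond-x})--(\ref{eq:cond-x'})) shows at most one member of each class can have trivial transposition product. This gives the factor $2^{-a n^{d/(d+2)}}$ with no independence hypothesis to verify; this is the ``new Peierls argument'' the abstract advertises.

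\textbf{Lower bound.} Your route via the mixing time of $\Romanbar{15}[B_{R,d}]$ is a genuinely different idea from the paper's, but it has a real gap. Conditioning the simple random walk on $\Zd$ to remain in $B_{R,d}$ for $2n$ steps and return to the origin does \emph{not} make the induced sequence of puzzle moves uniform and independent, so you cannot directly invoke a mixing-time bound for the random walk on $\Romanbar{15}[B_{R,d}]$. In addition, the Morris--Raymer estimate (\ref{eq:mr}) is proved only for the $d=2$ torus, so you would need new (even if crude) mixing bounds for boxes and for $d>2$. The paper avoids both issues with an elementary Cauchy--Schwarz argument on the decorated Markov chain $(X_j,\sigma_j)$ confined to $B_{L,d}$: writing the $2n$-step confined return probability as a quadratic form and applying Cauchy--Schwarz yields
\[ \mathfrak P_{2n}\big((0,\mathbbm 1)\to(0,\mathbbm 1)\big) \;\geq\; \frac{1}{|B_L|!}\,\mathbb P\big\{R_{2n}[\overrightarrow X]\subset B_{L,d},\ X_{2n}=0\big\}~, \]
which delivers the $1/|B_L|!$ factor with no mixing input at all. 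If you want to keep your mixing-time heuristic, you would need to decouple the lattice trajectory from the permutation, or work on a torus and correct for boundary effects --- both of which are more work than the Cauchy--Schwarz step.
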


The fifteen puzzle has been the subject of numerous mathematical studies, some of which we now mention. 
The connected components of the fifteen puzzle on finite graphs were fully described by Wilson \cite{Wilson}, following extensive earlier research pertaining to special families of graphs such as  $[1, \cdots, L]^2$. 
The asymptotic and probabilistic aspects of the problem were put forth by Diaconis \cite{Dia}, who  asked what is the mixing time of the random walk on a connected component of the state space for $G = [1, \cdots, L]^2$ and $G = (\mathbb Z / L \mathbb Z)^2$. The latter question was answered by Morris and Raymer \cite{MR}, who showed that the mixing time is of order $L^4 \log L$:
\begin{equation}\label{eq:mr} c L^4 \log L \leq T_{\operatorname{mix}} \left( \Romanbar{15}[(\mathbb Z / L \mathbb Z)^2] \right)\leq C L^4 \log L~.\end{equation} 
In contrast to the case of conventional random Schr\"odinger operators, the estimates (\ref{eq:propmain-intr}) and  (\ref{eq:mr}) do not seem to be comparable.

Finally, we remark that the relation between the integrated density of states $\N_{\Zd}$ and the random walk on $\Romanbar{15}[\Zd]$ (a Cayley graph of the symmetric group $\mathfrak S[\Zd]$) is roughly as that  between the integrated density of states of usual Schr\"odinger operators with Bernoulli potential and the random walk on the wreath product of $\{-1, +1\}$ and $\Zd$ (viz., the lamplighter group).

\paragraph{Open questions}

Bridging the logarithmic gap between the upper and lower bounds in Theorem~\ref{thm:prob} would result in matching bounds also in Theorem~\ref{thm}. We believe that the lower bound in both theorems is sharp, up to the values of the numerical constants.

The analogy with random operators naturally leads to  numerous follow-up questions. Do the operators $H[\Zd]$ boast any counterpart of Anderson localisation (see the monographs \cite{PF,AW} and references therein)? One can ask, for example, what is the  type of the measures corresponding to $\N_{\Zd, \t}(\lambda)$ in (\ref{eq:decomp-vt}), particularly, could these measures have pure point components near the edges. A related question is whether the quantum dynamics generated by the direct addends of $H[\Zd]$ is localised near the spectral edges. 

As hinted by (\ref{eq:1d}), the spectral properties of $H[\Zd]$ are trivial for $d=1$; still, they are most probably non-trivial already when $\mathbb Z$ is replaced by the strip $\mathbb Z \times \{0, 1\}$ of width two, for which  the transfer matrix method could perhaps be available.

One should bear in mind that, unlike Anderson-type operators, our quantum particle is coupled to the environment, therefore the analogy with the Anderson model should not be taken too literally. 
We believe that this circle of questions merits further investigation.

Finally, it would also be interesting to find other classes of groups for which once could construct operators similar to $H[\Zd]$. Particularly, we wonder whether a result similar to Theorem~\ref{thm} holds for an operator corresonding to the infinite unitary group. The analogy with the Perelomov--Popov matrices (see \cite{BufGor}) suggests that such an operator should act on functions from $\mathbb Z^d$ to the universal enveloping algebra of the unitary group, and have the block structure
\[ H^{\mathcal U}[\Zd](x, y) = \begin{cases}
E_{xy}~, &x \sim y \\
0~,
\end{cases}\]
where $E_{xy}$ are the generators of the universal enveloping algebra.

\section{Apology for the term ``integrated density of states''}

The goal of this section is to convince the reader that the integrated density of states as defined in (\ref{eq:ids}) shares some properties with the integrated density of states of conventional random operators. In the first part, we interpret $\N_{\Zd}$ as the average of a random spectral measure. In the second part, we prove Lemma~\ref{l:ids}, which expresses $\N_{\Zd}$ as the limit of  normalised eigenvalue counting functions. The content of this section is  not used in the proof of our main result. 

\subsection{$H$ as a random operator: infinite graphs}\label{s:rand2}  For the case of infinite graphs,  the interpretation of $H[G]$ as a random operator requires the construction of Fourier transform on the infinite symmetric group, due to Vershik and Tsilevich \cite{VT}. The goal of the current section is to derive a formula for $\N_{\Zd}(\lambda)$ as a mixture of spectral functions $\N_{\Zd, \t}(\lambda)$:
\begin{equation}\label{eq:mixt}\N_{\Zd}(\lambda) =\int_{\Tab} d\Pl(\t) \, \N_{\Zd, \t}(\lambda) \end{equation}
(clarifications below), which  resembles the representation of the integrated density of states of a conventional random operator as a mixture of the random spectral measures corresponding to realisations of the randomness.

Similarly to the conventional setting, (\ref{eq:mixt}) reflects the fact that the operator $H[\Zd]$ itself is a direct integral of operators $\widehat{H[\Zd]}_{\t}$ corresponding to the slices $\t \in \Tab$ (however, we restrict the discussion to (\ref{eq:mixt}),  in order to avoid introducing further notation). Thus, (\ref{eq:mixt}) justifies, to some extent, the analogy between $\N_{\Zd}(\lambda)$  and the density of states of random operators: the integrand in the right-hand side of (\ref{eq:mixt}) is a spectral function of a random operator, with $\Tab$ playing the r\^ole of a probability space, and thus motivates both the results and the open questions that we have stated in the introduction.

To define the objects appearing in  the formula (\ref{eq:mixt}) (the domain of integration $\Tab$, the probability measure $\Pl$, and the spectral function $\N_{\widehat {H[\Zd]}; 0, \t}$), we need to recall (without proofs) the main definitions and results from \cite{VT}. The desired formula appears in Corollary~\ref{cor:vt} at the end of this section as an easy corollary of the general theory of \cite{VT}, allowing to block-diagonalise any operators commuting with the right action of the infinite symmetric group.

\paragraph{Young diagrams, tableaux and bitableaux, and the Plancherel measure} Denote by $\mathbb Y_N$ the collection of Young diagrams with $N$ boxes (i.e.\ collections of $N$ identical squares arranged in left-aligned rows of non-decreasing length; these are in on-to-one correspondence with partitions of $N$). The Young graph $\mathbb Y$ is the directed graph the vertices of which are the elements of $\bigcup_{N \geq 1} \mathbb Y_N$, with an edge $\lam \to \lam'$ for each pair $\lam \in \mathbb Y_N$, $\lam'\in \mathbb Y_{N+1}$ such that  $\lam \subset \lam'$. 

A Young tableau  of length $N$ is a path $\t = (\t_n)_{n=1}^N$ from $\t_1 = \square \in \mathbb Y_1$ to some $\t_N = \lam \in \mathbb Y_N$; the set of all Young tableaux of length $N$ is denoted $\Tab_N$. The dimension $\dim \lam$ of $\lam \in \mathbb Y_N$ is the number of tableaux terminating at $\lam$, An infinite Young tableau is an infinte path from $\square$; the set of infinite tableaux is denoted $\Tab$. For an infinite tableau $\t$, let $\head_N(\t) = (\t_1, \cdots, \t_N) \in \Tab_N$ and $\tail_N(\t) = (\t_{N}, \t_{N+1}, \cdots)$. The functions $\head_N$ induce a topology on $\Tab$ (the topology of projective limit). 

A bitableau of length $N$ is a pair  $(\s, \t) \in \Tab_N^2$ such that $\s_N = \t_N$. The collection of bitableaux of length $N$ is denoted $\Bitab_N$. For $\s, \t \in \Tab$, we write $\s \sim_N \t$ if $\tail_N(\s) = \tail_N(\t)$, and $\s \sim \t$ if $\s \sim_N \t$ for some $N\geq 1$. The equivalence class of $\t$ with respect to $\sim$ is denoted $\tail(\t)$. An infinite bitableau is a pair $(\s , \t) \in \Tab^2$ such that $\s \sim \t$. The space of bitableaux is denoted $\Bitab$. It is equipped with the topology of inductive limit induced by the functions 
\[ \{ \t \in \Tab \, | \, \t_N = \s_N \} \to \Bitab~, \quad \t \mapsto (\t, (\s_1, \cdots, \s_{N-1}, \tail_N(\t)))  \]
(indexed by $N \geq 1$ and $\s \in \Tab_N$).	

The Plancherel measure $\Pl_N$ on $\Tab_N$ is  the probability measure $\Pl_N$, $\Pl_N(\{\t_N\}) = \frac{\dim \t_N}{N!}$. The Plancherel measure $\Pl$ on $\Tab$ is defined by the relations $(\head_N)_* \Pl = \Pl_N$, $N = 1, 2, \cdots$.  It defines a Markov process on $\mathbb Y$, starting from $\square \in \mathbb Y_1$ and with transition probabilities 
\begin{equation}\label{eq:trans}\p(\lam \to \lam') = \frac{\dim \lam'}{(N+1) \dim \lam}~, \quad \lam \in \mathbb Y_N~, \, \lam' \in \mathbb Y_{N+1}~, \,\, \lam \subset \lam'~. \end{equation}
The full Plancherel measure $\tilde \Pl_N$ on $\Bitab_N$ is defined as
\[\begin{split} \tilde\Pl_N &= \int_{\Tab_N} d\Pl_N(\t) \sum_{\s \in \Tab_N\, : \, \s_N = \t_N} \delta_{(\t,\s)} \\
&= \sum_{\t \in \Tab_N}  \frac{\dim \t_N}{N!} \sum_{\s \in \Tab_N: \, \s_N = \t_N} \delta_{(\t,\s)} 
= \sum_{\lam \in \mathbb Y_N} \frac{\dim \lam}{N!} \sum_{\s,\t \in \Tab_N: \s_N = \t_N = \lam} \delta_{(\t,\s)}~. \end{split}\]
  The full Plancherel measure $\tilde \Pl$ on $\Bitab$ is defined as 
\[ \tilde\Pl = \int_{\Tab} d\Pl(\t) \sum_{\s \in \Tab \, : \, \s \sim \t} \delta_{(\t,\s)}~, \quad \text{i.e.} \quad
\int_{\Bitab} f(\s, \t) \, d\tilde\Pl(\s, \t) = \int_{\Tab} d\Pl(\t) \, \sum_{\t \sim \s} f(\s,\t)~.\]
Note that the full Plancherel measure is a $\sigma$-finite measure rather than a probability measure.

\paragraph{Fourier transform on the symmetric group} Let $A_1 \subset A_2 \subset A_2 \cdots$ be an ascending chain of sets, $|A_N| = N$, and let $\Sym_N = \Sym[A_N]$ be the corresponding symmetric groups; also let $A = \cup_N A_N$, $\Sym = \cup_N \Sym_N$. From classical representation theory, the irreducible representations of $\Sym_N$ are in one-to-one correspondence with $\mathbb Y_N$, i.e.\ $\operatorname{Irrep} \Sym_N \simeq \mathbb Y_N$. We denote the representation corresponding to $\lam \in \mathbb Y_N$ by the same letter $\lam$, and by $E_\lam$ -- the ambient space. The dimension of $E_\lam$ is equal to $\dim \lam$ defined above, and, moreover, $E_\lam$ has a special orthonormal basis $(h_\t)$ labelled by tableaux $\t \in \Tab_N$ terminating at $\lam$.  Note that the pushforward of $\Pl_N$ to $\mathbb Y_N$ coincides with the Plancherel measure defined in the Introduction.

The Fourier transform $\mathcal F_N: \ell_2(\Sym_N) \to L_2(\Bitab_N, \tilde\Pl_N)$ is defined via
\[ \left(\mathcal F_N \sum\limits_{\pi \in \Sym_N} c_\pi \pi \right)(\s, \t) = \begin{cases}
\sum_{\pi \in \Sym_N} c_\pi \langle \lam(\pi) h_\s,  h_\t \rangle~, & \s_N = \t_N = \lam \\
0~, &\text{otherwise}\end{cases}\]
(here we have taken the liberty to omit the adjoint from the definition in \cite{VT}). It is a unitary isomorphism of Hilbert spaces. The Fourier transform $\mathcal F: \ell_2(\Sym) \to L_2(\Bitab, \tilde\Pl)$ is first defined on $\ell_2(\Sym_N)$ via
\[ \left(\mathcal F u \right)(\s, \t) = \begin{cases}
\left(\mathcal F_N u \right)(\head_N (\s), \head_N(\t))~, &\s \sim_N \t\\
0~, \end{cases}\]
and then extended by continuity. It is a unitary isomorphism between $\ell_2(\Sym)$ and $L_2(\Bitab, \tilde\Pl)$.
Following \cite[6.6]{VT}, we remark that $(\mathcal F \mathbbm 1)(\s,\t) = \mathbbm 1_{\s=\t}$. We also remark that while $\mathcal F_N$ is equivalent to the natural decomposition of a function on the symmetric group in the basis of matrix elements of irreducible representations, $\mathcal F$ is not directly related to any of the classical notions of irreducible representations of the infinite symmetric group.

\paragraph{Spectral decomposition of multiplication operators} An element $X \in \mathbb C[\Sym]$ (and more generally of the von Neumann group algebra of $\Sym$) defines a bounded operator on $\ell_2(\Sym)$ acting by multiplication from the left. Then $\hat X = \mathcal F X \mathcal F^*$ acts via
\[ (\hat X f)(\s, \t) = \sum_{\r \sim \s} \hat X(\s, \r) f(\r, \t)~.\]
Thus $\hat X$ induces a family of operators $\hat X_\t: \ell_2(\tail(\t)) \to \ell_2(\tail(\t))$, $\t \in \Tab$, given by
\[ \hat X_\t f(\s) = \sum_{\r \sim \s} \hat X(\s, \r) f(\r)\]
(i.e.\ $\hat X$ is the direct integral of $\hat X_t$ with respect to $d\Pl(t)$). 
If $X$ is self-adjoint, denote by $\N_{X; \mathbbm 1}(\lambda)$ the spectral function of $X$ corresponding to the vector $\mathbbm 1 \in \ell_2(\Sym)$ (the identity permutation), i.e.\ a cumulative distribution function defined by the property
\[ \forall  p \in \mathbb C[\lambda] \quad \int_{\mathbb R} p(\lambda) \, d\N_{X; \mathbbm 1}(\lambda) = \langle p(X) \,  \mathbbm 1, \mathbbm 1\rangle~, \]
and by $\N_{\hat X; \t}(\lambda)$ -- the similarly defined spectral function of $\hat X_\t$ corresponding to $\mathbbm 1_\t$:
\[ \forall  p \in \mathbb C[\lambda] \quad \int_{\mathbb R} p(\lambda) \, d\N_{\hat X; \t}(\lambda) = \langle p(\hat X_\t) \,  \mathbbm 1_\t, \mathbbm 1_\t\rangle = p(\hat X)(\t, \t)~. \]
 Then we have:
\begin{equation}\label{eq:decomp-prelim}  \N_{X; \mathbbm 1}(\lambda) = \int_{\Tab} d\Pl(\t) \N_{\hat X; \t}(\lambda)~.\end{equation}
Indeed, for any polynomial $p \in\mathbb C[\lambda]$,
\[ \begin{split}
\int_{\mathbb R} p(\lambda) \, d\N_{X; \mathbbm 1}(\lambda)
&= \langle p(X) \,  \mathbbm 1, \mathbbm 1\rangle = \langle p(\hat X) \,  \mathcal F \mathbbm 1, \mathcal F \mathbbm 1\rangle \\
&= \int d\tilde\Pl(s,t) \big( p(\hat X) \,  \mathcal F \mathbbm 1\big)(\s, \t) \overline{ (\mathcal F \mathbbm 1)(\s,\t)} \\
&= \int d\tilde\Pl(s,t) \sum_{\r \sim \s} p(\hat X)(\s, \r) \, \mathbbm 1_{\r = \t}\mathbbm 1_{\s = \t}  
=\int d\tilde\Pl(s,t) p(\hat X)(\s, \s) \mathbbm 1_{\s = \t} \\
&= \int d\Pl(\s) p(\hat X)(\s, \s) = \int_{\Tab} d\Pl(\s) \int_{\mathbb R} p(\lambda)\, d\N_{\hat X; \s}(\lambda)~.
\end{split}\]
In probabilistic terms, (\ref{eq:decomp-prelim}) expresses $\N_{X; \mathbbm 1}(\lambda) $ as the expectation of the random measure $\N_{\hat X; \t}(\lambda)$ when $\t$ is picked at random according to $\Pl$. This is almost what we need, except that the operator $H[G]$ which is studied in the current paper is not a multiplication operator such as $X$, but rather a block matrix formed from such operators.

\paragraph{Block multiplication operators} The construction from the previous paragraph, and particularly the relation (\ref{eq:decomp-prelim}), is raised to $\ell_2(B\to \ell_2(\Sym))$, where $B$ is an arbitrary set, as follows. Let $X$ be an operator acting  on this space via
\[ (X \psi)(x) = \sum_{y \in B} X(x, y) \psi(y)~,  \]
where, $X(x,y) \in \mathbb C[\Sym]$, and, say,
\[ \sup_{x \in B} \,\,\sum_{y \in B} (\|X(x, y) \| + \|X(y,x)\|) < \infty \]
(to ensure that the operator is bounded). Then $\hat X_\t$ acts on $\ell_2(B\times \tail(\t))$ via
\[ (\hat X f)(x,\s) = \sum_{y \in B; \r \sim \s} \widehat{ X(x, y) }(\s, \r) f(y, \r)~. \]
If  $X$ is self-adjoint, i.e.\ $X(y, x) = X(x, y)^*$ for any $x,y \in B$, we can define the spectral function $\N_{X; x, \mathbbm 1}$ of $X$ at $\delta_x \mathbbm 1$, and also the spectral function $\N_{\hat X; x, \t}$ of $\hat X_\t$ at $(x, \t)$. Then we have:
\[  \N_{X; x, \mathbbm 1}(\lambda) = \int_{\Tab} d\Pl(\t) \, \N_{\hat X; x, \t}(\lambda)~.\]

Now we can specialise to our operator $H[\Zd]$  by fixing an arbitrary enumeration of $\Zd$ and letting $B = \Zd$. We thus obtain a representation of the integrated density of states as the expectation of a random spectral measure, which we promised in (\ref{eq:mixt}):
\begin{cor}\label{cor:vt} Fix an ordering of $\Zd$ (i.e.\ an ascending chain $A_1 \subset A_2 \subset \cdots \subset A_N\subset \cdots$ such that $|A_N| = N$ and $\cup_N A_N = \Zd$). The integrated density of states $\N_{\Zd}(\lambda) = \N_{H[\Zd]; 0, \mathbbm 1}(\lambda)$ admits the decomposition
\[\N_{\Zd}(\lambda) =  \int_{\Tab} d\Pl(\t) \N_{\Zd, \t}(\lambda)~, \quad \N_{\Zd, \t}(\lambda) = \N_{\widehat {H[\Zd]}; 0, \t}(\lambda)~.\]
\end{cor}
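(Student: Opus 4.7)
The plan is to obtain Corollary~\ref{cor:vt} as a direct specialization of the block-operator spectral decomposition formula already established in the preceding paragraph, applied with $B = \Zd$ (with the fixed enumeration) and with the block matrix $X = H[\Zd]$ whose entries are $X(x,y) = (x\,y) \in \mathbb C[\Stwo]$ when $x \sim y$ and $X(x,y) = 0$ otherwise.

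The first step is to verify that $H[\Zd]$ falls within the scope of the block-multiplication construction. Each transposition $(x\,y)$ acts unitarily on $\ell_2(\Stwo)$ by left multiplication, so $\|H[\Zd](x,y)\| \leq 1$ for every pair $(x,y)$, and each $x \in \Zd$ has exactly $2d$ neighbours, so the summability hypothesis
\[ \sup_{x \in \Zd} \sum_{y \in \Zd} \bigl( \|H[\Zd](x,y)\| + \|H[\Zd](y,x)\| \bigr) \leq 4d < \infty \]
is satisfied. Self-adjointness $H[\Zd](y,x) = H[\Zd](x,y)^*$ is immediate from $(x\,y)^{-1} = (x\,y)$, so the block version of (\ref{eq:decomp-prelim}) applies and yields
\[ \N_{H[\Zd]; 0, \mathbbm 1}(\lambda) = \int_{\Tab} d\Pl(\t) \, \N_{\widehat{H[\Zd]}; 0, \t}(\lambda)~. \]

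The second step is to match the two sides with the quantities in the statement. On the left, $\N_{H[\Zd]; 0, \mathbbm 1}(\lambda)$ is by construction the spectral measure of $H[\Zd]$ at the vector $\delta_0 \mathbbm 1 \in \H[\Zd]$, which by the characterisation (\ref{eq:ids}) in Lemma~\ref{l:ids} is precisely the integrated density of states $\N_{\Zd}(\lambda)$. On the right, $\N_{\widehat{H[\Zd]}; 0, \t}(\lambda)$ is, by the definition introduced in the previous paragraph, the spectral measure of the fibre operator $\widehat{H[\Zd]}_\t$ on $\ell_2(\Zd \times \tail(\t))$ at the vector $(0, \mathbbm 1_\t)$, which is what the statement names $\N_{\Zd, \t}(\lambda)$. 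Substituting these identifications gives the displayed formula.

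There is essentially no obstacle here: the corollary is a formal consequence of the Vershik--Tsilevich machinery recalled in the section, and the only verifications are the bookkeeping above. If one wishes to make the reduction fully explicit, the single point worth spelling out is that the Fourier isomorphism $\mathcal F: \ell_2(\Stwo) \to L_2(\Bitab, \tilde \Pl)$ extends entrywise to a unitary $\mathrm{Id} \otimes \mathcal F$ on $\H[\Zd] \simeq \ell_2(\Zd) \otimes \ell_2(\Stwo)$ that intertwines the action of $H[\Zd]$ with the kernel action defining $\widehat{H[\Zd]}$; this is automatic from the definition of $\mathcal F$ since the block entries $X(x,y)$ themselves lie in $\mathbb C[\Stwo]$ and act from the left.
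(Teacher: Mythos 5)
Your proposal is correct and follows exactly the route the paper takes: the corollary is obtained by specialising the block-multiplication-operator formula $\N_{X;x,\mathbbm 1}=\int d\Pl(\t)\,\N_{\hat X;x,\t}$ of the preceding paragraph to $B=\Zd$ and $X=H[\Zd]$, the paper itself offering no further argument. Your explicit verification of the summability hypothesis, the self-adjointness of the blocks, and the identification of both sides with the quantities in the statement is exactly the bookkeeping the paper leaves implicit.
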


\subsection{Proof of Lemma~\ref{l:ids}}\label{s:ids}

To prove the convergence of $\N_{B_{L,d}}$ to $\N_{\Zd}$, it suffices to show that for any $n \geq 1$
\begin{equation}\label{eq:l:ids:need}\lim_{L \to \infty} \frac{1}{|B_{L,d}| \times |B_{L,d}|!} \operatorname{tr} H[B_{L,d}]^n = \tau\left( H[\Zd]^n(0,0)\right)~.\end{equation}
The expression under the limit equals 
\[ \frac{1}{|B_{L,d}|} \sum_{x \in B_{L,d}} \tau \left( H[B_{L,d}]^n(x, x)\right)~. \]
All the $|B_{L,d}| = (2L+1)^d$ terms in the sum are bounded by $C_n = (4d)^n$. Most of the terms, namely, those that correspond to $x$ at distance $> n/2$ from the boundary of $B_{L,d}$, are equal to the right-hand side of (\ref{eq:l:ids:need}), and this concludes the proof of convergence, and we turn to the proof of the equality between the support of $d\N_{B_{L,d}}$ and $[-2d, 2d]$.

The operator   of multiplication by a transposition is unitary, hence $\| H[B_{L,d}]  \| \leq 2d$ for any $L$, thus definitely the support of the integrated density of states is contained in $[-2d, 2d]$.  To see that equality holds, let $\lambda \in [-2d, 2d]$; choose  $\alpha \in  (\mathbb R / \mathbb Z)^d$ such that $2 \sum_{j=1}^d \cos (2 \pi \alpha_j) = \lambda$. Let $\epsilon > 0$, and choose $\ell =  \lceil C_* \epsilon^{-2} \rceil$, where  $C_*$ is large enough. Then for any $B \supset B_{\ell,d}$ and any function $\psi \in \mathfrak H[B]$ of the form
\begin{equation}\label{eq:tmptmp1} \psi(x)  = \begin{cases}
e^{2\pi i \langle \alpha, x \rangle} \left[ \sum\limits_{\pi \in \Sym[B]}  c_\pi \pi \right]~, & \|x \|\leq \lceil \frac \ell 2 \rceil \\
0
\end{cases} \end{equation}
where $c_\bullet$ is constant on the right cosets of $\Sym[B_{\ell,d}]$, we have:
\begin{equation}\label{eq:approxev} \|H[B] \psi - \lambda \psi \| \leq C \ell^{-1/2} \|\psi\| \leq \epsilon \|\psi\|~,\end{equation}
since by construction the expression in the square brackets of (\ref{eq:tmptmp1}) is invariant under all $(x \, y)$ such that $\| x \| \leq \lceil \frac{\ell}2 \rceil$ and $y$ is adjacent to $x$, and therefore the computation is reduced to applying the identity matrix of $\Zd$ to the function equal to $\exp(2\pi i \langle \alpha, x \rangle)$ for $x \in B_{\frac\ell2, d}$ and to zero outside this box.
Observe that the space of such functions $\psi$ is of dimension $|B|!/|B_{\ell,d}|!$

The box $B_{\ell,d}$ can be replaced with  any congruent box  lying inside $B$; moreover, any linear combination of functions of this form coming from disjoint congruent boxes still satisfies (\ref{eq:approxev}). Consequently, the operator $H[B_{k\ell,d}]$ has at least
\[ k^d \times (2k\ell+1)^d! /  (2\ell+1)^d ! \]
eigenvalues in $(\lambda - \epsilon, \lambda + \epsilon)$, counting multiplicity. Letting $k \to \infty$, we obtain:
\[ \N_{\Zd}(\lambda+ \epsilon) - \N_{\Zd}(\lambda -  \epsilon) \geq
\lim_{k \to \infty} \frac{ k^d \times (2k\ell+1)^d! /  (2\ell+1)^d !}{(2k\ell + 1)^d \times (2k\ell + 1)^d!} > 0~.
\]
This holds for any $\epsilon > 0$, hence $\lambda$ lies in the support.
\qed

\section{Proof of the main result}

\subsection{Estimate on the moments of $H$}

 We first deduce Theorem~\ref{thm} from Theorem~\ref{thm:prob}; the proof of the latter is spread over the next two subsections.

\begin{proof}[Proof of Theorem~\ref{thm}] First, rewrite Theorem~\ref{thm:prob} as an estimate on  the moments of $H = H[\Zd]$: for each $d \geq 2$, there exist $C > 0$ and $c> 0$ such that for any $n \geq 0$
\begin{equation}\label{eq:propmain} c \exp\left\{-C n^{\frac{d}{d+2}} \log^{\frac{2}{d+2}}(n+2)\right\} \leq \frac{\tau( H^{2n}(0,0))}{(2d)^{2n}} \leq C \exp\left\{-c n^{\frac{d}{d+2}} \right\}~.\end{equation}

\noindent {\em Upper bound}:
According to   the second inequality in (\ref{eq:propmain}),
\[\begin{split} C e^{-cn^{\frac{d}{d+2}}} 
&\geq \frac{\tau(H^{2n}(0,0))}{(2d)^{2n}}\\
&= \int_{-2d}^{2d} (\lambda/(2d)) ^{2n} d\N(\lambda) 
\geq \int_{-2d}^{-2d+\epsilon} (\lambda/(2d)) ^{2n} d\N(\lambda)  \geq (1 - \frac{\epsilon}{2d})^{2n} \N(-2d+\epsilon)
\end{split}\]
Take $n = \lceil  a \epsilon^{-\frac{d+2}{2}} \rceil$, then 
\[ (1 - \frac{\epsilon}{2d})^{2n} \geq \exp(- C a \epsilon^{-\frac{d}2})~,\quad
e^{-cn^{\frac{d}{d+2}}}  \leq \exp(-c a^{\frac{d}{d+2}} \epsilon^{\frac{d}{2}})~, \]
whence
\[ \N(-2d+\epsilon) \leq C \exp( - \left[ c a^{\frac{d}{d+2}} - Ca\right]\epsilon^{\frac{d}{2}})~,\]
and it remains to take $a > 0$ small enough to ensure that the expression in the square brackets is positive.

\medskip\noindent
{\em Lower bound}: 
According to the the first inequality in (\ref{eq:propmain}), 
\[\begin{split}
c \exp\left\{-C n^{\frac{d}{d+2}} \log^{\frac{2}{d+2}}(n+2)\right\} 
&\leq 2 \int_{-2d}^0  (\lambda/(2d)) ^{2n} d\N(\lambda) \\
&\leq 2 \left\{ \N(-2d+\epsilon) + \exp(-\frac{\epsilon n}{d})\right\}
\end{split}\]
whence
\[  \N(-2d+\epsilon) \geq \frac{c}{2} \exp\left\{-C n^{\frac{d}{d+2}} \log^{\frac{2}{d+2}} (n+2)\right\}  - \exp(-\frac{\epsilon n}{d})~. \]
Now we take $n = \lceil a  \epsilon^{-\frac{d+2}{2}} \log (\frac{1}\epsilon + 2)\rceil$, where $a>0$ is sufficiently small.
\end{proof}

\subsection{Some properties of the random walk on $\Zd$}
Let $\overrightarrow{x} = (x_j)_{j \geq 0}$ be a path on $\Zd$ starting from $x_0 = 0$.  For $y \in \Zd$, let 
\[ j^*(y) = j^*(y; \overrightarrow{x}) = \min \left\{ j \geq 0 \, | \, x_{j} = y \right\}  \in \mathbb Z_+ \cup \{+\infty\}\]
be the time of the first visit to $y$.  Let
\[ R_n[\overrightarrow x] = \{ y_j \, | \, 0 \leq j < n \} = \{ y \, | \, j^*(y; \overrightarrow x ) < n \} \]
be the range of the head of $\overrightarrow x$, and let $R[\overrightarrow x] = \cup_{n \geq 0} R_n[\overrightarrow x]$.  A vertex $y \in \Zd$ is called $\overrightarrow x$-{\em flexible} if $x \in R[\overrightarrow x]$ and 
\[ x_{j^*(y)+2} - x_{j^*(y)+1} \neq \pm (x_{j^*(y)+1} - x_{j^*(y)})~; \]
we denote the collection of all flexible vertices by $F[\overrightarrow x]$. Set 
\[ R_n^{\operatorname{even}}[\overrightarrow x] = R_n[\overrightarrow x] \cap \Even~, \quad F^{\operatorname{even}}[\overrightarrow x] = F[\overrightarrow x] \cap \Even~,\]
where $\Even \subset \Zd$ consists of the vertices with even sum of coordinates. Finally, denote by $y_+^*(j) = y_+^*(j; \overrightarrow x)$ the $j$-th visited even vertex, i.e.\ the vertex $y \in \Even$ such that for some $n \geq 0$
\[ R_{2n}^{\operatorname{even}}[\overrightarrow x] = R_{2n-2}^{\operatorname{even}}[\overrightarrow x] \uplus  \{y\}~, \quad  |R_{n+2}[\overrightarrow x]|=j \]
(where formally $R_{-2} = \varnothing$). These definitions are illustrated by Figure~\ref{fig:flex}.
\begin{figure}[h]
\begin{center}
\includegraphics[scale=2]{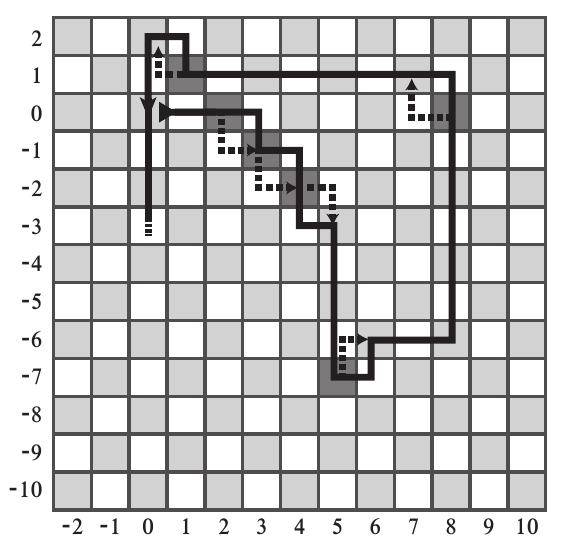}
\end{center}
\caption{The 21-edges long initial segment of an infinite path on $\mathbb Z^2$ (solid black). The even flexible vertices are highlighted by a darker shade of grey. The equivalence class of this path in the sense of (\ref{eq:equiv})  below consists of the $64$ paths obtained by replacing some of the two-step segments starting with an even flexible vertex with their dashed counterparts.}\label{fig:flex}
\end{figure}

We apply these definitions to a realisation $\overrightarrow X = (X_j)_{j \geq 0}$ of the simple random walk on $\Zd$, starting from $X_0 = 0$.
The main result of this section is
\begin{prop}\label{prop:flex}
 For each $d \geq 2$, there exist $a > 0$ and $A > 0$ such that for any $n \geq 1$
\[ \mathbb P \left\{ | R_n^{\operatorname{even}}[\overrightarrow X] \cap F^{\operatorname{even}}[\overrightarrow X] | \leq a n^{\frac{d}{d+2}} \right\} \leq A \exp\{-n^{\frac{d}{d+2}}\}~. \] 
\end{prop}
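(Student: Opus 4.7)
The plan is to combine three ingredients: (i) a Donsker-Varadhan lower-tail estimate on the range $|R_n[\vec X]|$, (ii) a deterministic bipartite count showing that a positive fraction of newly visited vertices are even, and (iii) an Azuma-type concentration for the flexibility indicators. Enumerate the distinct visited vertices as $Y_k = X_{T_k}$, so $R_n = \{Y_1, \ldots, Y_{|R_n|}\}$. The classical Donsker-Varadhan logarithmic asymptotics $\log \mathbb{E} e^{-\nu |R_n|} \asymp -n^{d/(d+2)}$, which after a Markov inequality presumably underlie the "simple variant" referred to in the introduction as Lemma~\ref{l:1}, yield constants $c_1, C > 0$ with
\[\mathbb{P}\bigl(|R_n[\vec X]| \leq c_1 n^{d/(d+2)}\bigr) \leq C \exp\bigl(-n^{d/(d+2)}\bigr).\]

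For $k \geq 2$, the vertex $X_{T_k - 1}$ is a neighbor of $Y_k$ lying in $\{Y_1, \ldots, Y_{k-1}\}$ and of opposite parity; since each even vertex has only $2d$ neighbors, a direct double count gives the \emph{deterministic} bound $\#\{k \leq K : Y_k \in \Even\} \geq K/(2d+1)$ for every $K \geq 1$. For flexibility, set $E_k = \mathbbm{1}[Y_k \in \Even]$ and $F_k = \mathbbm{1}[Y_k \text{ flexible}]$. Conditionally on $\mathcal{F}_{T_k+1} := \sigma(X_0, \ldots, X_{T_k+1})$, the step $X_{T_k+2} - X_{T_k+1}$ is uniform on the $2d$ unit vectors and independent of the past, hence $\mathbb{P}(F_k = 1 \mid \mathcal{F}_{T_k+1}) = (d-1)/d$. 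Since $E_k$ is $\mathcal{F}_{T_k}$-measurable, the process
\[M_K := \sum_{k=1}^K \bigl(E_k F_k - \tfrac{d-1}{d} E_k\bigr), \qquad K = 1, 2, \ldots,\]
is a martingale with respect to the filtration $(\mathcal{F}_{T_k+2})_{k \geq 1}$ (note $\mathcal{F}_{T_{k-1}+2} \subseteq \mathcal{F}_{T_k+1}$ since $T_{k-1} < T_k$), with increments bounded by $1$. Azuma-Hoeffding at $K := \lceil c_1 n^{d/(d+2)} \rceil$ yields
\[\mathbb{P}\bigl(M_K \leq -\tfrac{d-1}{2d(2d+1)} K\bigr) \leq \exp(-cK) \leq \exp\bigl(-c' n^{d/(d+2)}\bigr).\]

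On the intersection of the two high-probability events we have $|R_n| \geq K$, so $Y_1, \ldots, Y_K \in R_n$; combining the three ingredients,
\[|R_n^{\mathrm{even}} \cap F^{\mathrm{even}}| \geq \sum_{k=1}^K E_k F_k \geq \tfrac{d-1}{d(2d+1)} K - \tfrac{d-1}{2d(2d+1)} K = \tfrac{d-1}{2d(2d+1)} K \geq a n^{d/(d+2)},\]
with $a := c_1 (d-1)/(2d(2d+1))$; a union bound produces the tail estimate claimed in the proposition. The main obstacle is the sub-exponential range bound at the Donsker-Varadhan scale $n^{d/(d+2)}$ (Lemma~\ref{l:1}); once this is granted, the bipartite observation and the flexibility martingale are essentially routine, and the only mild subtlety is the correct choice of the stopping-time filtration $(\mathcal{F}_{T_k+2})_k$ ensuring that flexibility indicators of consecutive new vertices indeed form a martingale difference sequence.
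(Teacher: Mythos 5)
Your argument is correct, and its skeleton is the same as the paper's: (a) the walk visits many even vertices, and (b) a positive fraction of the freshly visited even vertices are flexible, the key point in both treatments being that, conditionally on the walk up to time $j^*(y)+1$, the next increment is uniform on the $2d$ unit vectors and independent of the past, so flexibility of $y$ fails with conditional probability exactly $1/d$. For step (a) you invoke the Donsker--Varadhan-type bound on $|R_n|$ and then pass to even vertices by the $(2d+1)$-neighbour double count; this is precisely how the paper proves Lemma~\ref{l:1} (which is stated directly for $R_n^{\operatorname{even}}$). Where you genuinely diverge is step (b). The paper (Lemma~\ref{l:2}) resamples the walk from an array of i.i.d.\ increments indexed by first-visit data, so that the flexibility indicators of distinct fresh even vertices become exactly independent Bernoulli$(\tfrac{d-1}{d})$ variables, and concludes with a binomial tail bound; you instead note that $E_k\bigl(F_k-\tfrac{d-1}{d}\bigr)$ is a martingale difference sequence for the stopped filtration $(\mathcal F_{T_k+2})_k$ and apply Azuma--Hoeffding. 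Your filtration bookkeeping is right ($T_{k-1}+2\le T_k+1$, so the tower property gives the vanishing conditional expectation), and the martingale route is arguably more robust, as it avoids the explicit re-indexing construction. The one discrepancy is the constant in the exponent: Azuma yields $A\exp\{-c\,n^{\frac{d}{d+2}}\}$ with a small $d$-dependent $c$, whereas the proposition asserts the constant $1$. This is inessential: only the form $\exp\{-c n^{\frac{d}{d+2}}\}$ is used in the proof of Theorem~\ref{thm:prob}, and indeed no argument of this type can do better, since the probability that none of $m$ fresh even vertices is flexible is exactly $d^{-m}$ (for the same reason the rate $b$ in Lemma~\ref{l:2} cannot in fact exceed $\log d$, although it is invoked with $b=1/a_1$ large). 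So your proof establishes the proposition up to this harmless renormalisation of the exponent.
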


\noindent The proof of Proposition~\ref{prop:flex} rests on two lemmata. The first one, Lemma~\ref{l:1}, ensures that the range of the random walk is not too small. It is closely related to the large deviation estimate of  Donsker and Varadhan \cite{DV} on the range of the random walk. Since we only need a crude version of a special case of the result of \cite{DV}, we provide a self-contained proof. The second one, Lemma~\ref{l:2} ensures that a sizeable part of the vertices in the range are flexible. 

\begin{lemma}\label{l:1}
For each $d \geq 1$ there exist $a_1 > 0$ and $A_1 > 0$ such that for any $n \geq 1$
\[ \mathbb P \left\{ | R_n^{\operatorname{even}}[\overrightarrow X]| \leq a_1  n^{\frac{d}{d+2}} \right\} \leq A_1 \exp\{-n^{\frac{d}{d+2}}\}~. \] 
\end{lemma}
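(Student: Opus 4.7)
My approach to Lemma~\ref{l:1} is a Peierls-style enumeration combined with a Faber--Krahn (spectral confinement) bound, which is the elementary core of the Donsker--Varadhan large deviation estimate for the range of the random walk.

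The starting observation is that $R_n^{\operatorname{even}}[\overrightarrow X]$ is a connected subgraph of $\Even$ when the latter is viewed as a graph with edges between vertices at $\ell^1$-distance $2$, since $\|X_{2j+2} - X_{2j}\|_1 \in \{0, 2\}$ for every $j$. Consequently,
\[ \{|R_n^{\operatorname{even}}[\overrightarrow X]| \leq k\} \subseteq \bigcup_{A} \{ X_{2j} \in A \text{ for all } 0 \leq j \leq \lfloor n/2 \rfloor\}~, \]
where the union runs over connected subsets $A \subseteq \Even$ of size at most $k$ containing $0$. The quantitative estimate is that for each such $A$ of size $k$,
\[ \mathbb P(X_{2j} \in A \text{ for all } 0 \leq j \leq m) \leq \|P|_A\|^m \leq \exp(-c_d m / k^{2/d})~, \]
where $P$ is the transition kernel of the walk $Y_j = X_{2j}$ (a symmetric bounded-range random walk on $\Even$) and $P|_A$ is its Dirichlet restriction to $A$. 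The second inequality is a discrete Faber--Krahn inequality on $\Even \simeq \Zd$: the Dirichlet spectral gap of a symmetric bounded-range random walk confined to $k$ sites is $\geq c_d k^{-2/d}$, which is standard (derivable, e.g., from the discrete Nash/Sobolev inequality on $\Zd$) and, in dimension $d=1$, easily verifiable directly on intervals.

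The final ingredient is a Peierls enumeration: since the ambient graph on $\Even$ has bounded degree ($2d^2$), the number of connected subsets of size $k$ containing the origin is $\leq C_d^k$ for some constant $C_d$ (classical lattice-animal count via spanning trees). Taking a union bound, summing over $k$, and setting $K = \lceil a_1 n^{d/(d+2)}\rceil$, $m = \lfloor n/2 \rfloor$, one obtains
\[ \mathbb P(|R_n^{\operatorname{even}}[\overrightarrow X]| \leq K) \leq K \, C_d^{K} \, \exp\big(-c'_d\, a_1^{-2/d}\, n^{d/(d+2)}\big)~, \]
so that for $a_1 > 0$ chosen sufficiently small, the term $c'_d a_1^{-2/d}$ dominates $a_1 \log C_d$ in the exponent, and the polynomial prefactor $K$ is absorbed. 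I expect the main technical point to be the discrete Faber--Krahn bound --- not because it is deep, but because the walk $P$ has range $2$ rather than $1$, so I would cite a reference carefully rather than rederive it from scratch.
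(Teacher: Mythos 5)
Your argument is correct and is essentially the paper's proof: a Peierls enumeration of the connected lattice animals that could contain the range, a Faber--Krahn confinement bound on each candidate set, and a union bound, with $a_1$ small enough that the entropy term loses to the spectral gap (the missing polynomial prefactor $\sqrt{|A|}$ in your confinement estimate is harmless and appears in the paper too). The only difference is that the paper sidesteps the range-$2$ Faber--Krahn inequality you flag as the technical point: instead of confining the two-step walk to $A\subset\Even$, it confines the full nearest-neighbour walk to the $\ell^1$-neighbourhood $R=\{x:\operatorname{dist}(x,R^{\operatorname{even}})\le 1\}$, which is still connected and of comparable size, so that the elementary one-dimensional Poincar\'e bound of Lemma~\ref{l:1.5} applies directly.
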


\begin{lemma}\label{l:2}
For each $d \geq 2$ and $b > 0$ there exist $a_2 > 0$ and $A_2 > 0$ such that for any $m \geq 1$
\[ \mathbb P \left\{  \left|  \left\{ 0 \leq j < m \, | \, y_+^*(j, \overrightarrow X) \in F^{\operatorname{even}}[\overrightarrow X] \right\} \right| \leq a_2   m \right\} \leq A_2 e^{-b m}~, \]
 \end{lemma}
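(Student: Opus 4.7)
The plan is to show that the flexibility indicators
\[ \eta_j := \mathbbm{1}\{y_+^*(j;\overrightarrow X) \in F^{\operatorname{even}}[\overrightarrow X]\}, \quad j = 0, 1, 2, \ldots, \]
form an i.i.d.\ sequence of Bernoulli random variables with parameter $1 - 1/d$, after which the stated bound will follow from the Chernoff inequality applied to the Binomial sum $S_m := \sum_{j=0}^{m-1}\eta_j$.

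To set things up, write $T_j := j^*(y_+^*(j;\overrightarrow X))$ for the first visit time to the $j$-th new even vertex and $\xi_n := X_{n+1} - X_n$ for the $n$-th increment of the walk. By definition $\eta_j = \mathbbm{1}\{\xi_{T_j+1} \neq \pm \xi_{T_j}\}$, so $\eta_j$ is a function of just two consecutive increments. The $T_j$'s are stopping times of the natural filtration $(\mathcal F_n)$, and since $\Zd$ is bipartite and the $y_+^*(j)$ are distinct, one has the crucial gap $T_{j+1} \geq T_j + 2$. By the strong Markov property applied at $T_j$, the pair $(\xi_{T_j},\xi_{T_j+1})$ is uniform on $\{\pm e_1,\ldots,\pm e_d\}^2$ and independent of $\mathcal F_{T_j}$; a direct count of directions gives $\mathbb P(\eta_j = 1 \mid \mathcal F_{T_j}) = (2d-2)/(2d) = 1 - 1/d$. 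Each $\eta_i$ with $i < j$ is $\mathcal F_{T_i+2}$-measurable and hence $\mathcal F_{T_j}$-measurable, so $\mathbb E[\eta_j \mid \eta_0,\ldots,\eta_{j-1}] = 1 - 1/d$ holds deterministically, and induction on $j$ yields the claimed i.i.d.\ structure. (For $d \geq 2$ the walk almost surely visits infinitely many distinct even vertices, so all $T_j$ are a.s.\ finite.)

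With independence established, the standard Chernoff bound gives
\[ \mathbb P(S_m \leq a_2 m) \leq \exp\bigl(-m\, H(a_2;\,1-1/d)\bigr), \]
where $H(q;p) = q\log(q/p) + (1-q)\log((1-q)/(1-p))$ is the binary Kullback--Leibler divergence. Given $b > 0$, I choose $a_2 > 0$ small enough that $H(a_2;\,1-1/d) \geq b$, which is available since $H(a_2;\,1-1/d) \to \log d > 0$ as $a_2 \to 0^+$; this yields $\mathbb P(S_m \leq a_2 m) \leq e^{-bm}$ with $A_2 = 1$ in the range of $b$ actually used in the application. The main delicacy of the argument is verifying that the $\eta_j$ are genuinely i.i.d.\ despite the random, data-dependent visit times $T_j$; this is precisely what the parity gap $T_{j+1} \geq T_j + 2$ secures, and once it is in hand the rest is elementary.
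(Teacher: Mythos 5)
Your argument is correct and rests on the same probabilistic observation as the paper's proof: the two increments taken immediately after the first arrival at each new even vertex constitute fresh uniform randomness, so the flexibility indicators are i.i.d.\ Bernoulli of parameter $1-1/d$ and the claim reduces to the lower tail of a Binomial variable. The formalisations differ, though. The paper builds an auxiliary walk $\overrightarrow Z\overset{\text{distr}}{=}\overrightarrow X$ driven by an i.i.d.\ array $W(j,k,\iota)$ indexed by the first-visit order of the current even vertex and the number of previous visits to it; the flexibility of the $j$-th visited even vertex is then literally the event $\{W(j,0,2)\neq \pm W(j,0,1)\}$, and independence over $j$ is manifest because distinct $j$ read disjoint entries of the array. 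You instead apply the strong Markov property at the first-visit stopping times $T_j$ and use the parity gap $T_{j+1}\geq T_j+2$ to make $\eta_0,\dots,\eta_{j-1}$ measurable with respect to $\mathcal F_{T_j}$. Both routes are sound; yours is the more standard one, while the paper's resampling construction buys independence by construction without any measurability bookkeeping at stopping times.

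One caveat, which your closing remark brushes against but does not resolve. Since the event in question contains $\{S_m=0\}$, whose probability is exactly $d^{-m}$, no choice of $a_2>0$ and $A_2$ can yield the bound $A_2e^{-bm}$ for all $m$ once $b>\log d$; the lemma as stated (``for each $b>0$'') is therefore false in that range, and the paper's own proof runs into the same ceiling, since its concluding Binomial estimate cannot produce a rate better than $\log d$ either. Your hedge that $A_2=1$ suffices ``in the range of $b$ actually used'' does not in fact cover the application: Proposition~\ref{prop:flex} invokes the lemma with $b=1/a_1$ for a small $a_1$, which typically exceeds $\log d$. The repair is harmless: state the lemma with a fixed admissible rate $c_0=c_0(d,a_2)>0$ in place of an arbitrary $b$, and let the resulting constant $c_0a_1$ propagate into the exponents of Proposition~\ref{prop:flex} and Theorem~\ref{thm:prob}, where all constants are unspecified anyway. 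With that adjustment your proof (and the paper's) is complete.
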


\begin{proof}[Proof of Proposition~\ref{prop:flex}]
Apply Lemma~\ref{l:1} and then Lemma~\ref{l:2} with $b = 1/a_1$ and $m = a_1 n^{\frac{d}{d+2}}$; set $a = a_1 a_2$ and $A = A_1 + A_2$. \end{proof}

\paragraph{On the spectrum of the Dirichlet Laplacian}
Here we prove Lemma~\ref{l:1.5}, which is used in the proof of Lemma~\ref{l:1}, and Lemma~\ref{l:3},   required for the proof of the lower bound in Theorem~\ref{thm:prob}. Both facts are well known (the first one is a discrete Sobolev inequality, the second one is an explicit computation).

 Denote by $P$ the generator of the random walk $\overrightarrow X$,
\[ (Pu)(x) = \frac1{2d} \sum_{y \sim x} u(y)~. \]
Denote by $\mathbbm 1_R$ the indicator of a set $R \subset \mathbb Z^2$, and also the corresponding multiplication operator on $\ell_2(\Zd)$.

\begin{lemma}\label{l:1.5} Let $d \geq 1$, and let $R \subset \Zd$ be a finite set. Then
\[ \| \mathbbm 1_R P \mathbbm 1_R \| \leq 1 - \frac{c_1}{\operatorname{diam}^2 R} \leq 1 - \frac{c_2}{|R|^{\frac2d}}~,\]
where $c_1,c_2>0$ may depend on $d$ but not on $R$.
\end{lemma}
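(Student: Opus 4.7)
The plan has three steps. First, since $\mathbbm 1_R P \mathbbm 1_R$ is self-adjoint, the Rayleigh quotient gives
\[ \|\mathbbm 1_R P \mathbbm 1_R\| = \sup_{\substack{u \in \ell_2(\Zd) \\ \operatorname{supp} u \subseteq R,\ \|u\|_2 = 1}} \langle Pu, u\rangle = 1 - \inf_u \langle (I-P)u, u\rangle, \]
where the infimum is over the same class of $u$. A short computation yields the standard graph Dirichlet form identity
\[ \langle (I-P)u, u\rangle = \frac{1}{2d}\sum_{j=1}^d\sum_{x \in \Zd}(u(x+e_j) - u(x))^2. \]
Proving $\|\mathbbm 1_R P \mathbbm 1_R\| \leq 1 - c_1/D^2$ (with $D = \operatorname{diam} R$) therefore reduces to the Poincaré-type bound $\|u\|_2^2 \leq C D^2 \langle (I-P)u, u\rangle$ for all $u$ supported in $R$.

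Second, I would establish this Poincaré inequality by slicing along a coordinate axis. Fix $e_1$ and decompose $\Zd$ into the parallel lines $L_z = z + \mathbb Z e_1$. The graph diameter dominates the extent in every coordinate direction (since $\|x\|_\infty \leq \|x\|_1$), so the support of $v_z(t) := u(z + te_1)$ lies in a discrete interval of length at most $D$. Telescoping $v_z(t) = \sum_{s < t}(v_z(s+1) - v_z(s))$ inside the support and applying Cauchy-Schwarz gives $v_z(t)^2 \leq D \sum_s (v_z(s+1) - v_z(s))^2$; summing over $t$ yields the one-dimensional Wirtinger estimate $\sum_t v_z(t)^2 \leq D^2 \sum_s (v_z(s+1) - v_z(s))^2$. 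Summing over all transversals $z$ and bounding by the full Dirichlet form gives $\|u\|_2^2 \leq 2d D^2 \langle (I-P)u, u\rangle$, which yields the first inequality with $c_1 = 1/(2d)$.

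Finally, for the second inequality I would observe that any set of graph diameter $D$ in $\Zd$ is contained in an $\ell_\infty$-box of side $D$, so $|R| \leq (D+1)^d$ and hence $\operatorname{diam} R \geq c\, |R|^{1/d}$ for a constant $c>0$ depending only on $d$ (absorbing the finitely many small-$|R|$ cases into the constant). No step here presents a serious obstacle: the argument is essentially the Faber--Krahn lower bound on the Dirichlet eigenvalue of the lattice Laplacian, obtained from a one-dimensional discrete Wirtinger inequality, and the only mild point to verify is the translation between graph distance and coordinate extent.
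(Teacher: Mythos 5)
Your treatment of the first inequality is correct and is essentially the paper's own argument: reduce the claim to a Poincar\'e bound for the Dirichlet form $\langle (I-P)u,u\rangle$ and prove that bound by the one-dimensional telescoping/Cauchy--Schwarz estimate along coordinate lines (the paper phrases the same step as $\max_{x''}|u(x'',x')|^2 \geq \sum_{x''}|u(x'',x')|^2/\operatorname{diam}R$ on each line). One small point, which the paper also elides: the Rayleigh quotient controls only the top of the spectrum, while the norm also involves $-\lambda_{\min}$; on $\Zd$ this is harmless because the sign flip $u(x)\mapsto(-1)^{x_1+\cdots+x_d}u(x)$ preserves the support of $u$ and negates the quadratic form.

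Your final step, however, runs in the wrong direction. From $|R|\le(D+1)^d$ you deduce $D\ge c\,|R|^{1/d}$, but that inequality makes $1-c_1/D^2$ \emph{larger} than $1-c_2/|R|^{2/d}$, i.e.\ it gives the reverse of what you want. The displayed chain would require $D\le C|R|^{1/d}$, which is false for elongated sets: for a segment of $N+1$ collinear points one has $D=N$ while $|R|^{2/d}\approx N^{2/d}$, so for $d\ge2$ and large $N$ indeed $1-c_1/D^2=1-c_1N^{-2}>1-c_2N^{-2/d}$, and the middle term does not dominate the right-hand term. The bound that is actually needed downstream (in the proof of Lemma~\ref{l:1}) is $\|\mathbbm 1_R P\mathbbm 1_R\|\le 1-c_2|R|^{-2/d}$ for an arbitrary finite $R$; this is the discrete Faber--Krahn inequality, which does not follow from the diameter bound and requires a separate argument --- for instance Nash's inequality $\|u\|_2^{2+4/d}\le C_d\,\langle(I-P)u,u\rangle\,\|u\|_1^{4/d}$ together with $\|u\|_1\le|R|^{1/2}\|u\|_2$ for $u$ supported in $R$. (In fairness, the paper's proof also establishes only the diameter bound and asserts the second inequality without justification; as a literal chain it is false, although both endpoint bounds on the norm are true.)
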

 
\begin{lemma}\label{l:3} Let $d \geq 1$ and let $B_L = [-L, L]^d$. Then  $\mathbbm 1_{B_L} P \mathbbm 1_R$ has an eigenvalue $\lambda \geq 1 - c_3 L^{-2}$ such that the corresponding eigenfunction $u$, $\| u \| = 1$, satisfies $|u(0)| \geq c_4 L^{-d/2}$, where $c_3,c_4>0$ may depend on $d$ but not on $L$.
\end{lemma}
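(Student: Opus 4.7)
The plan is to write down an explicit eigenfunction by separating variables, since on a rectangular box the (discrete) Dirichlet Laplacian factorises as a tensor product of one-dimensional operators. First I would treat the one-dimensional case: the eigenfunctions of $\mathbbm 1_{\{-L,\dots,L\}}\, P_1 \,\mathbbm 1_{\{-L,\dots,L\}}$, with $P_1 u(x) = (u(x-1) + u(x+1))/2$, are the discrete sines
$v_k(x) = \sqrt{2/(2L+2)}\, \sin(\pi k (x+L+1)/(2L+2))$ for $k = 1, \dots, 2L+1$, with eigenvalues $\cos(\pi k/(2L+2))$. This is standard, and follows by a sum-to-product identity together with the normalisation $\sum_{j=1}^{2L+1} \sin^2(\pi j/(2L+2)) = (2L+2)/2$.

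Next, writing $P = (2d)^{-1} \sum_{i=1}^d (T_i + T_i^*)$ where $T_i$ is the unit shift in the $i$-th coordinate, and using that $B_L$ is a $d$-fold product of intervals, I would take the tensor product $u(x_1,\dots,x_d) = \prod_{i=1}^d v_1(x_i)$ of top one-dimensional eigenfunctions. A direct verification shows that $u \in \ell_2(B_L)$ and that $\mathbbm 1_{B_L} P \mathbbm 1_{B_L} \, u = \lambda u$ with $\lambda = \cos(\pi/(2L+2))$ (which, in fact, is the largest eigenvalue, though we only need that it is an eigenvalue with the stated size).

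The remaining numerical bounds are immediate. The normalisation gives $\|u\| = \prod_i \|v_1\| = 1$. At the origin, $v_1(0) = \sqrt{2/(2L+2)}\,\sin(\pi/2) = \sqrt{2/(2L+2)}$, whence $|u(0)| = (2/(2L+2))^{d/2} \geq c_4 L^{-d/2}$. Finally, the Taylor expansion $\cos t \geq 1 - t^2/2$ yields $\lambda \geq 1 - \pi^2/(2(2L+2)^2) \geq 1 - c_3 L^{-2}$. No serious obstacle is anticipated: as the authors note, the statement amounts to an explicit diagonalisation of the Dirichlet Laplacian on a box, and the entire argument consists in choosing the right eigenfunction and plugging $x = 0$ into it.
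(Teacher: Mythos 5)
Your proposal is correct and is essentially the paper's own argument: since $\sin\bigl(\pi(x+L+1)/(2L+2)\bigr)=\cos\bigl(\pi x/(2(L+1))\bigr)$, your tensor product of ground-state sines is exactly the product-of-cosines eigenfunction used in the paper, with the same eigenvalue $\cos(\pi/(2(L+1)))$ and the same evaluation at the origin. The only (harmless) difference is that you compute the normalising constant explicitly where the paper simply bounds it by $|B_L|^{-1/2}$.
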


\begin{proof}[Proof of Lemma~\ref{l:1.5}] Let $u\in\ell_2(\Zd)$ be a function supported in $R$. We need to show that 
\begin{equation}\label{eq:need-l1.5}  -(1 - c_1 / \operatorname{diam}^2 R) \|u\|^2 \leq \langle \mathbbm 1_R P \mathbbm 1_R u, u \rangle \leq (1 - c_1 / \operatorname{diam}^2 R) \|u\|^2~. \end{equation}
Observe that
\begin{equation}\label{eq:l-1.5-1} \begin{split}
\langle \mathbbm 1_R P \mathbbm 1_R u, u \rangle
&= \langle P u, u \rangle \\
&= \frac{1}{2d} \sum_{x \in \Zd} \sum_{y \sim x} u(x) \overline{u(y)} \\
&= \frac{1}{2d} \sum_{x \in \Zd} \sum_{i=1}^d u(x) (\overline{u(x + e_i)} +  \overline{u(x - e_i)})~. 
\end{split}\end{equation}
Let us prove, for example, the second inequality in (\ref{eq:need-l1.5}).
For each $x' \in \mathbb Z^{d-1}$, 
\[\begin{split} 
&\frac{1}{2d}  \sum_{x'' \in \mathbb Z} u(x'', x') (\overline{u(x''+1, x')} + \overline{u(x''-1,x')}) \\
&\qquad= \frac{1}{d} \sum_{x''} |u(x'', x')|^2 - \frac1{2d} \sum_{x''} |u(x'', x') - u(x''+1,x')|^2 \\
&\qquad\leq \frac{1}{d} \sum_{x''} |u(x'', x')|^2 (1 - \frac{c_1}{\operatorname{diam}^2 R})\end{split}\]
(for example, since $\max_{x''} |u(x'', x')|^2 \geq \sum_{x''} |u(x'', x')|^2 / \operatorname{diam} R $). A similar inequality holds  for the other $d-1$ terms in (\ref{eq:l-1.5-1}). 
\end{proof}

\begin{proof}[Proof of Lemma~\ref{l:3}] Take  
\[ u(x) = \begin{cases} c_L \prod_{i=1}^d \cos \frac{\pi x(i)}{2(L+1)}~, & \|x\| \leq L \\
0~,\end{cases}\] 
where $x(i)$ is the $i$-th co\"ordinate of $x$, and $c_L$ is chosen  so that $\|u \|=1$. Then $|u(0)| = |c_L| \geq c L^{-d/2}$, and 
\[ P u = \left[ \frac{1}{d} \sum_{j=1}^d \cos \frac{\pi}{2L} \right]  u~, \quad  \frac{1}{d} \sum_{j=1}^d \cos \frac{\pi}{2L} \geq 1 - \frac{\pi^2}{8L^2}~.\qedhere \]
\end{proof}

\paragraph{Proof of Lemma~\ref{l:1}}
Let $R^{\operatorname{even}} \subset \Even$ be a set of cardinality $|R^{\operatorname{even}}| \leq a_1 n^{\frac{d}{d+2}}$. If $R_n^{\operatorname{even}}[\overrightarrow X] \subset R^{\operatorname{even}}$, then 
\[ R_n[\overrightarrow X] \subset R  {=} \{ x \in \mathbb Z^2 \, \mid \, \operatorname{dist}(x,R^{\operatorname{even}}) \leq 1\}~; \quad
|R| \leq (2d+1) |R^{\operatorname{even}}| \leq (2d+1) a_1 n^{\frac{d}{d+2}}~.\]
According to Lemma~\ref{l:1.5},
\[\begin{split} \mathbb P \{ X_0, \cdots, X_{n} \in R \} 
&= \sum_{x \in R} (\mathbbm 1_{R} P \mathbbm 1_{R})^{n}(x, 0) \\
&\leq |R| \| \mathbbm 1_{R} P \mathbbm 1_{R} \|^{n} \\
&\leq |R| (1 - \frac{c_2}{|R|^{\frac{2}{d}}})^{n} \leq (2d+1) a_1 n^{\frac{d}{d+2}}  \exp(- \frac{c_3}{a_1} n^{\frac{d}{d+2}})~.
\end{split}\]
On the other hand, $R_n[\overrightarrow X]$ is connected and contains the origin, hence the number of ways to choose $R$ is at most $C^{(2d+1)a_1 n^{\frac{d}{d+2}}} = \exp\{C_1 a_1 n^{\frac{d}{d+2}}\}$. Therefore
\begin{equation}\label{eq:in-l-1}\mathbb P\left\{ |R_n^{\operatorname{even}}[\overrightarrow X]| \leq a_1 \sqrt n \right\} 
\leq (2d+1) a_1 n^{\frac{d}{d+2}}  \exp(-\frac{c_3}{a_1} n^{\frac{d}{d+2}} + C_1 a_1 n^{\frac{d}{d+2}})~. \end{equation}
For sufficiently small $a_1$, $\frac{c_3}{a_1} - C_1 a_1 \geq 2$, and then
\[ \text{RHS of } (\ref{eq:in-l-1}) \leq (2d+1) a_1 n^{\frac{d}{d+2}}  \exp(- 2 n^{\frac{d}{d+2}}) \leq A_1 \exp(- n^{\frac{d}{d+2}})~. \qed \]

\paragraph{Proof of Lemma~\ref{l:2}} Define a new random walk $\overrightarrow Z$, as follows.  Let $(W(j,k,\iota))_{j,k\geq 0, \iota\in \{1,2\}}$ be an array of independent, identically distributed random variables taking on each of the values in $\{\pm e_i, 1 \leq i \leq d\}$ with probability $1/(2d)$.
Then set $Z_0 = 0$ and inductively
\[\begin{split}
Z_{2n+1} &= Z_{2n}+ W(j^*(Z_n), \# \{ 0 \leq n' < n \, | \, Z_{n'} = Z_n \}, 1)~; \\
Z_{2n+2} &= Z_{2n+1}+ W(j^*(Z_n), \# \{ 0 \leq n' < n \, | \, Z_{n'} = Z_n \}, 2)~. 
\end{split}\]
Then $\overrightarrow Z \overset{\text{distr}}{=} \overrightarrow X$, and hence
\[ \begin{split}
&\mathbb P \left\{ | \{ 0 \leq j < m \, \big | \, y^*(j; \overrightarrow X) \in F^{\operatorname{even}}[\overrightarrow X]\}| \leq a_2 m  \right\} \\
&\quad= \mathbb P\left\{ | \{ 0 \leq j < m \, \big | \, W(j,0,2) \neq \pm W(j,0,1)\}| \leq a_2 m  \right\} 
\leq A_2 e^{-b m} \end{split}\]
provided that $a_2$ is chosen to be small enough. \qed

\subsection{Proof of Theorem~\ref{thm:prob}}
Let $\overrightarrow X$ be the simple random walk starting from the origin. Then
\[ \frac{1}{(2d)^{2n}}\tau(H^{2n}(0, 0)) = \mathbb P \left\{ \pi_{2n}[\overrightarrow X] \overset{\text{def}}{=} (X_0 \, X_1) (X_1 \, X_2) \cdots (X_{2n-1} X_{2n}) = \mathbbm 1 \right\}~. \]

\paragraph{Upper bound} 
Consider the collection 
\[ 
\Omega_{2n} = \{ \overrightarrow x \, \mid \, | R_{2n}^{\operatorname{even}}[\overrightarrow x] \cap F^{\operatorname{even}}[\overrightarrow x] | \geq a (2n)^{\frac{d}{d+2}}\}
\]
of paths starting from the origin, where $a$ is as in Proposition~\ref{prop:flex}, and let $\Omega_{2n}'$ be the factor of $\Omega_{2n}$ by the heads $(x_0, x_1, \cdots, x_{2n})$ (i.e.\ by the relation $\sim_{2n}$, where $\overrightarrow x \sim_{2n} \overrightarrow x'$ if $x_j = x'_j$ for all $j \leq  2n$). Then (by the cited proposition)
\[ \mathbb P(\overrightarrow X \in \Omega_{2n}) \geq 1 - A \exp(-(2n)^{\frac{d}{d+2}})~. \] 
We call $\overrightarrow x, \overrightarrow x' \in \Omega_{2n}$  equivalent ($\overrightarrow x \sim \overrightarrow x'$) if for any $1 \leq j \leq 2n$ 
\begin{equation}\label{eq:equiv} \begin{cases}
\text{either $x_j = x_j'$}\\
\text{or $j$ is odd, $x_{j-1} \in F[\overrightarrow x]$ and $j-1 = j^*(x_{j-1}; \overrightarrow x)$.}
\end{cases}\end{equation}
 In other words, if $j-1$ is the time of the first visit of $\overrightarrow x$ to a flexible even vertex $x$, we are allowed to replace $x_{j}$ with $x'_{j}=  x_{j-1} -x_{j} + x_{j+1}$. See Figure~\ref{fig:flex} for an illustration. Note that  $\sim$ is indeed an equivalence relation on $\Omega_{2n}$ and on $\Omega_{2n}'$. By the construction of $\Omega_{2n}$, every $\sim$-equivalence class $\mathfrak x$ on $\Omega_{2n}'$ is of cardinality 
\[ |\mathfrak x| \geq 2^{a (2n)^{\frac{d}{d+2}}}~. \]
{\em Claim:} in each class $\mathfrak x$ on $\Omega_{2n}'$, there is at most one representitative $[\overrightarrow x]$ for which $\pi_{2n}[\overrightarrow x] = \mathbbm 1$. Indeed, if $\overrightarrow x \sim \overrightarrow x'$ and $x_j \neq x_{j'}$ for some $j \leq 2n$, then 
\[ j-1 = j^*(x_j, \overrightarrow x) = j^*(x_j, \overrightarrow x')~. \]
Among such indices $j$, choose the smallest one, so that $x'_i = x_i$ for $i < j$. If $\pi_{2n}[\overrightarrow x] = \mathbbm 1$, then 
\[ \big( (x_{j} \, x_{j-1}) (x_{j-1} \, x_{j-2})   \cdots (x_{1} \, x_{0}) \big) x_{j-1} = x_{j-1}~, \]
whence 
\begin{equation}\label{eq:cond-x} \big( (x_{j-1} \, x_{j-2})   \cdots (x_{1} \, x_{0}) \big) x_{j-1} = x_{j}~.\end{equation}
Similarly, if $\pi_{2n}[\overrightarrow x'] = \mathbbm 1$, then 
 \begin{equation}\label{eq:cond-x'} \big( (x'_{j-1} \, x'_{j-2})   \cdots (x'_{1} \, x'_{0}) \big) x'_{j-1}  = \big( (x_{j-1} \, x_{j-2})   \cdots (x_{1} \, x_{0}) \big) x_{j-1} = x_{j}'~.\end{equation}
Clearly, at most one of the identities (\ref{eq:cond-x}) and (\ref{eq:cond-x'}) can hold. This proves the claim. 

Now the proof of the upper bound is concluded as follows:
\[\begin{split}
\mathbb P \left\{ \pi_{2n}[\overrightarrow X] = \mathbbm 1\right\}
&\leq \mathbb P \left\{ \overrightarrow X \notin \Omega_{2n}\right\} + \sum_{\mathfrak x} 2^{-a (2n)^{\frac{d}{d+2}}} \mathbb P \left\{ \overrightarrow X \in \mathfrak x\right\}  \\
&\leq A \exp(-(2n)^{\frac{d}{d+2}}) + 2^{-a (2n)^{\frac{d}{d+2}}} \leq C \exp(-c n^{-\frac{d}{d+2}})~,
\end{split}\]
as claimed.

\paragraph{Lower bound}
Consider the random walk $\overrightarrow X$ starting from the origin. According to Lemma~\ref{l:3}, we have for $1 \leq L\leq n^{0.49}$:
\begin{equation}\label{eq:forlowerbd}\begin{split}
 \mathbb P \left\{ R_{2n}[\overrightarrow X] \subset B_{L,d}~, \,\, X_{2n} = 0 \right\} &= (\mathbbm 1_{B_{L,d}} P \mathbbm 1_{B_{L,d}})^{2n}(0, 0) \\
&\geq c_4^2 L^{-d} (1-L^{-2})^{2n} \geq c' \exp(- C' L^{-2} n)~. 
\end{split}\end{equation}
Consider the decorated Markov process $\overrightarrow {\mathfrak X}_L = (X_j, \sigma_j)_{j \geq 0}$, $\sigma_{j+1} =  \sigma_j (X_{j+1} \, X_j)$ (with arbitrary initial points $(X_0, \sigma_0)$), and let 
\[ \mathfrak P_{n}((x, \sigma) \to (x', \sigma')) =  \mathbb P \left\{ X_1, \cdots, X_{n-1} \in B_L~, \, X_{n} = x',  \pi_n[\overrightarrow X] = \sigma'  \, \big| \, X_0 = x, \sigma_0 = \sigma \right\}~. \]
Then by (\ref{eq:forlowerbd}) and the Cauchy--Schwarz inequality
\[\begin{split} c' \exp(- C' L^{-2} n) 
&\leq \sum_{\sigma \in \Sym(B_{L,d})}  \mathfrak P_{2n}((0, \mathbbm 1) \to (0, \sigma)) \\
&= \sum_{\sigma, \sigma' \in \Sym(B_{L,d}), x \in B_L}  \mathfrak P_{n}((0, \mathbbm 1) \to (x, \sigma'))  \mathfrak P_{n}((0, \sigma) \to (x, \sigma') ) \\
&= \sum_{\sigma, \sigma' \in \Sym(B_{L,d}), x \in B_L}  \mathfrak P_{n}((0, \mathbbm 1) \to (x, \sigma'))  \mathfrak P_{n}((0, \mathbbm 1) \to (x, \sigma^{-1}\sigma') ) \\
&\leq |B_L|! \sum_{\sigma' \in \Sym(B_{L,d}), x \in B_L}  \mathfrak P_{n}((0, \mathbbm 1) \to (x, \sigma'))^2 
= |B_L|!\mathfrak P_{2n}((0, \mathbbm 1) \to (0, \mathbbm 1))~, 
\end{split}\]
whence
\[\begin{split} \mathbb P \left\{ \pi_{2n}[\overrightarrow X] = \mathbbm 1\right\} 
&\geq \mathfrak P_{2n}((0, \mathbbm 1) \to (0, \mathbbm 1)) \\
&\geq \frac{1}{|B_L|!} \times c' \exp(- C' L^{-2}n) \geq c' \exp(- C' L^{-2}n - (2L+1)^d \log (2L+1))  ~. 
\end{split}\]
Taking $L =n^{\frac{1}{d+2}} \log^{-\frac{1}{d+2}} (n+2)$, we obtain
\[\mathbb P \left\{ \pi_{2n}[\overrightarrow X] = \mathbbm 1\right\}  \geq c'' \exp(- C'' n^{\frac{d}{d+2}} \log^{\frac{2}{d+2}} (n+2))~,\]
as claimed. \qed

\paragraph{Acknowledgement.} We are grateful to Alexey Bufetov, Vadim Gorin, Matthias T\"aufer and Vlad Vysotsky for helpful discussions, and to Grigori Olshanski and Anatoly Vershik for the comments on the first draft of this work.

\end{document}